\theoremstyle{plain}
\newtheorem{theorem}{Theorem}
\theoremstyle{plain}
\newtheorem{proposition}{Proposition}
\theoremstyle{plain}
\theoremstyle{plain}
\newtheorem{lemma}{Lemma}
\theoremstyle{plain}
\begin{document}
\title{Superdirectivity-enhanced Wireless Communications:
A Multi-user  Perspective}
\author{Liangcheng Han,
        and Haifan Yin,~\IEEEmembership{Member,~IEEE}
      
\thanks{Liangcheng Han and Haifan Yin are with the School of Electronic Information and Communications, Huazhong University of Science and Technology,
Wuhan 430074, China (e-mail: hanlc@hust.edu.cn; yin@hust.edu.cn;). The corresponding author is Haifan Yin.}
\thanks{This work was supported by the National Natural Science Foundation of China under Grant 62071191.}
}

\maketitle
\begin{abstract}
Superdirective array may achieve an array gain proportional to the square of the number of antennas $M^2$. In the early studies of superdirectivity, little research has been done from wireless communication point of view. To leverage superdirectivity for enhancing the spectral efficiency, this paper investigates multi-user communication systems with superdirective arrays.  We first propose a field-coupling-aware (FCA) multi-user channel estimation method, which takes into account the antenna coupling effects. Aiming to maximize the power gain of the target user, we propose multi-user multipath superdirective precoding (SP) as an extension of our prior work on coupling-based superdirective beamforming. Furthermore, to reduce the inter-user interference, we propose interference-nulling superdirective precoding (INSP) as the optimal solution to maximize user power gains while eliminating interference. Then, by taking the ohmic loss into consideration, we further propose  a regularized interference-nulling superdirective precoding (RINSP) method. Finally, we discuss the well-known narrow directivity bandwidth issue, and find that it is not a fundamental problem of superdirective arrays in multi-carrier communication systems. Simulation results show our proposed methods outperform the state-of-the-art methods significantly. Interestingly, in the multi-user scenario, an 18-antenna superdirective array can achieve up to a 9-fold increase of spectral efficiency compared to traditional multiple-input multiple-output (MIMO), while simultaneously reducing the array aperture by half.
\end{abstract}

\begin{IEEEkeywords}
superdirectivity, compact arrays, multi-user precoding, channel estimation, wideband beamforming.
\end{IEEEkeywords}
\IEEEpeerreviewmaketitle
\section{Introduction}\label{I} 
\IEEEPARstart{M}{assive} multiple-input multiple-output (MIMO) has emerged as one of the key technologies for the 5th generation (5G) communication systems \cite{marzetta2010noncooperative}. By deploying a large number of antennas at the base station, it may lead to a significant improvement in spectral efficiency. Using basic precoding techniques like maximum ratio transmission (MRT) or zero-forcing (ZF), massive MIMO empowers spatial division multiple access (SDMA) to serve multiple users simultaneously on the same frequency\cite{rusek2012scaling}. However, to reduce the mutual coupling in massive MIMO systems, the antenna spacing is no less than half a wavelength. This constraint further limits the increase in the number of antennas and the improvement of spectral efficiency. In recent years, with the pursuit of  higher spectral efficiency, compact arrays have attracted widespread attention. 
The concept of Holographic MIMO\cite{bjornson2019massive,pizzo2022fourier,pizzo2020spatially,huang2020holographic}  has appeared in the vision of wireless communication researchers. In Holographic MIMO, the locations of antennas is no longer discrete; all antennas will be continuously distributed on the antenna panel. Such surfaces consisting of densely populated electromagnetic excitation elements are able to record and manipulate incident fields with maximum flexibility and precision, while reducing cost and power consumption, thereby forming arbitrary intended electromagnetic waves with high energy efficiency\cite{gong2022holographic}.
In fact, the strong mutual coupling introduced by compact arrays may endow superdirectivity to the antenna array, achieving an array gain of $M^2$ instead of $M$ with the traditional MIMO\cite{marzetta2019super}.

Directivity reflects the ability of an array to focus electromagnetic radiation in a particular direction. Superdirectivity refers to the ability of an antenna array to produce a radiation pattern with a substantially narrower beamwidth and higher gain compared to a regular antenna array \cite{Collin_Zucker_1969}. In 1946, Uzkov has proved that the directivity of an isotropic linear array of $M$ antennas can reach $M^2$ when the spacing between antennas approaches zero\cite{uzkov1946approach}. 
Despite the early introduction of superdirectivity, researchers have mainly focused on single-user scenarios from the perspective of the antenna array theories \cite{altshuler2005monopole,o2006electrically,hansen2006electrically,clemente2015design,jaafar2018characteristic}, which are different from communication theories.  To the best of our knowledge, the research on applying superdirective antenna arrays to multi-user communication systems to improve the spectral efficiency of wireless communication systems has not been demonstrated in the open literature. This paper will conduct a detailed study on this topic.

In our previous work \cite{han2023superdirective}, we investigated how to implement superdirective antenna arrays in practice and found that the field coupling introduced in compact arrays leads to strong distortions in the antenna radiation patterns. Therefore, the channel model in compact arrays needs to be corrected based on the distortion of the array response caused by field coupling. To this end, we propose a coupling channel model for compact arrays and present a field-coupling-aware (FCA) multi-user channel estimation method. Based on the estimated channel, we propose a series of superdirective multi-user precoding schemes. First, we propose the multi-user  superdirective  precoding (SP) scheme, which extends our previous approach \cite{DBLP:conf/icc/HanYM22} of coulpling-based superdirective beamforming  to multi-user multipath scenarios. The SP scheme aims to maximize the power gain for each user. However, the performance of this scheme may degrade when user interference is strong. Thus, we formulate a problem to maximize target user power gain while nulling interference in superdirective arrays and derive the optimal solution, which is called interference-nulling superdirective precoding (INSP) scheme. Considering the ohmic loss in superdirective arrays, we propose regularized interference-nulling superdirective precoding (RINSP), and we provide the power gain expression under a given antenna loss. Additionally, to address the well-known narrow directivity bandwidth issue in superdirective arrays, we propose a superdirective wideband multi-user beamforming method.

Specifically, the contributions of this paper are as follows:
\begin{itemize}
    \item To the best of our knowledge, the paper is the first to study superdirective antenna arrays in multi-user communication systems, and we demonstrate a significant improvement in spectral efficiency compared to traditional  MIMO.
    \item We propose an FCA channel estimation method that considers the channel distortion caused by antenna field coupling. Simulation results show that, compared to the traditional channel estimation method that ignores field coupling, the FCA estimation scheme can significantly reduce channel estimation errors.
    \item For multi-user scenarios, we initially propose SP scheme to maximize user power gain. To further enhance performance and specifically eliminate user interference, we then introduce INSP scheme for superdirective arrays. Compared to MRT and ZF, the proposed precoding methods achieve a significant improvement in spectral efficiency.
    \item To alleviate the issue of ohmic loss in superdirectivity-enhanced wireless communications, we propose a novel scheme, namely RINSP. We also derive the corresponding expression for power gain. Simulations show that the RINSP brings obvious spectral efficiency gains compared to existing method  for superdirective antenna arrays with ohmic loss.
    \item We analyze the cause of the narrow directivity bandwidth in superdirective arrays and propose a wideband multi-user superdirective precoding method. We show that the narrow bandwidth issue is not as fundamental as it was thought to be in modern multi-carrier wideband wireless communication systems. 
\end{itemize}

The organization of this paper is as follows:
In Sec. \ref{II}, we first briefly review superdirective beamforming and then propose  FCA channel estimation method. In Sec. \ref{III}, we present three precoding methods for superdirective multi-user transmission. In Sec. \ref{IV} , we analyze the cause of the narrow directivity bandwidth in superdirective arrays and propose a wideband multi-user superdirective precoding method. Finally, simulation results are presented in Sec. \ref{V}.

Notation: Lower-case and upper-case boldface letters represent vectors and matrices, respectively; $(\cdot)^*,(\cdot)^T,(\cdot)^H,\|\cdot\|_F$, and $\|\cdot\|_k$ denote the conjugate, transpose, conjugate transpose, Frobenius norm, and $k$-norm of a matrix, respectively; $|\cdot|$ denotes the absolute operator; $\mathbf{I}_N$ represents the identity matrix of size $N \times N$; $\mathbf{X} \otimes \mathbf{Y}$ is the Kronecker product of $\mathbf{X}$ and $\mathbf{Y}$; $\operatorname{vec}(\mathbf{X})$ is the vectorization of the matrix $\mathbf{X}$; $\operatorname{diag}\left\{\mathbf{a}_1, \ldots, \mathbf{a}_{\mathbf{N}}\right\}$ denotes a diagonal matrix or a block diagonal matrix with $\mathbf{a}_1, \ldots, \mathbf{a}_{\mathbf{N}}$ at the main diagonal; $\triangleq$ is used for definition.

\section{System models}\label{II}
In this section, we review the superdirective beamforming and introduce the channel model for compact arrays. In particular, we propose a channel estimation method that considers the field coupling effect.
\subsection{Superdirective beamforming}\label{Superdirective_beamforming}
For ease of exposition, we consider a compact uniform linear array (ULA) formed by isotropic antennas at the base station, with an inter-antenna spacing of $d$ (i.e., $d$ is smaller than half a wavelength) and a total number of antennas $M$. As the antenna spacing is small, the field coupling effect between antennas needs to be taken into account\cite{han2023superdirective}. In this case, the far-field radiation pattern $l(\theta,\phi)$ of the array can be expressed as
\begin{align}
l(\theta, \phi)=\sum_{m=1}^M \sum_{n=1}^N c_{nm} a_m e^{j k \hat{\mathbf{r}} \cdot \mathbf{r}_n},
\end{align}
where $c_{nm}$ is the coupling coefficient between the $n$-th and the $m$-th antennas, $a_m$ is the beamforming coefficient of the $m$-th antenna, $k=\frac{2\pi}{\lambda}$ is the wavenumber, $\lambda$ is the wavelength, $\mathbf{r}_n$ is the spatial coordinate of the $n$-th antenna, and $\hat{\mathbf{r}}$ is the unit vector in the spherical coordinate system, which can be expressed as
\begin{align}
\hat{\mathbf{r}} \triangleq\begin{bmatrix}
\sin \theta \cos \phi\
\sin \theta \sin \phi \
\cos \theta
\end{bmatrix}^T,
\end{align}
where $\theta$ and $\phi$ are the elevation and azimuth angles, respectively. According to the definition in \cite{stutzman2012antenna}, the directivity factor of the array can be expressed as
\begin{align}\label{D}
D(\theta,\phi)=\frac{(\mathbf{C a})^T \mathbf{e}(\theta,\phi) \mathbf{e}^H(\theta,\phi)(\mathbf{C a})^*}{(\mathbf{C a})^T \mathbf{Z}(\mathbf{C a})^*},
\end{align}
where $\mathbf{C}\in \mathbb{C}^{M\times M}$ is the field coupling matrix:
\begin{equation}\label{Cmatrix}
\mathbf{C}\triangleq\left[\begin{array}{ccc}
c_{11} & \ldots & c_{1 M} \\
\vdots & \ddots & \vdots \\
c_{M 1} & \ldots & c_{M M}
\end{array}\right].
\end{equation}
The specific calculation approach for the field coupling matrix $C$ can be found in \cite{han2023superdirective}. $\mathbf{a}$ and $\mathbf{e}(\theta,\phi)$ are the beamforming and steering vectors, respectively:
\begin{align}
\mathbf{a} &= \left[a_1, a_2, \cdots, a_M\right]^T, \\
\mathbf{e}(\theta,\phi) &= \left[e^{j k \hat{\mathbf{r}} \cdot \mathbf{r}_1} , e^{j k \hat{\mathbf{r}} \cdot \mathbf{r}_2} , \cdots, e^{j k \hat{\mathbf{r}} \cdot \mathbf{r}_M} \right]^T.
\end{align}
$\mathbf{Z}\in \mathbb{R}^{M\times M}$ is the impedance coupling matrix, and the $(m,n)$-th element of $\mathbf{Z}$, is given by
\begin{align}\label{zmn}
z_{m n} = \frac{1}{4 \pi} \int_0^{2 \pi} \int_0^\pi e^{j k \hat{\mathbf{r}} \cdot \mathbf{r}_m} e^{-j k \hat{\mathbf{r}} \cdot \mathbf{r}_n} \sin \theta d \theta d \phi
\end{align}
For isotropic antennas, a closed-form expression for $z_{mn}$ in \eqref{zmn} is available and is given by
\begin{equation}
z_{m n}=\left\{\begin{array}{cc}
\frac{\sin [k d(n-m)]}{k d(n-m)}, & n \neq m \\
1, & n=m
\end{array}\right.,
\end{equation}
For the convenience of analysis, we rewrite \eqref{D} as
\begin{align}\label{D2}
D(\theta,\phi)=\frac{\bar{\mathbf{a}}^T \mathbf{e}(\theta,\phi) \mathbf{e}^H(\theta,\phi)\bar{\mathbf{a}}^*}{\bar{\mathbf{a}}^T\mathbf{Z}\bar{\mathbf{a}}^*},
\end{align}
where $\bar{\mathbf{a}}=\mathbf{C}\mathbf{a}$, and the conversion from $\bar{\mathbf{a}}$ to $\mathbf{a}$ can be easily done by taking the inverse of $\mathbf{C}$. The physical meanings of the numerator and denominator in \eqref{D2} are the radiation power in the $(\theta,\phi)$ direction and the average radiation power over the entire space, respectively. Maximizing \eqref{D2} with respect to $\bar{\mathbf{a}}$ yields\cite{han2023superdirective}
\begin{align}\label{bara}
\bar{\mathbf{a}}=\gamma \mathbf{Z}^{-1} \mathbf{e}^*(\theta,\phi),
\end{align}
where $\gamma$ is the power constraint coefficient. The maximum directivity factor is expressed by
\begin{align}
D_{\max}(\theta,\phi)=\mathbf{e}^H(\theta,\phi) \mathbf{Z}^{-1} \mathbf{e}(\theta,\phi).
\end{align}
\subsection{Field coupling aware channel estimation (FCA)}
The superdirective antenna array requires a compact arrangement of antennas. In this case, there is significant field coupling between the antennas, which distorts the radiation pattern of the antennas\cite{han2023superdirective}. The distorted  radiation pattern will cause variations in the amplitude and phase of the array response, which is a critical factor in beamforming and affects the performance of antenna arrays. Obviously, neglecting the field coupling will result in severe channel estimation errors. 
Our previous work \cite{han2023superdirective} has shown that this field coupling between the antennas can be accurately described by the field coupling matrix. In the following modeling, we will introduce the field coupling matrix to make the model more closely match the real physical channel.

Without loss of generality, assume that there are $U$ users in the system, each equipped with a single antenna. The base station operates in time-division duplexing (TDD) mode, which means that the uplink and downlink channels are reciprocal. 
In order to obtain channel state information (CSI), each user is assigned a pilot sequence of length $\tau$. The pilot sequence $\mathbf{s}_u$ for user $u$ is denoted by:
\begin{align}
    \mathbf{s}_u=\left[\begin{array}{llll}
s_{u 1} & s_{u 2} & \cdots & s_{u \tau}
\end{array}\right]^T .
\end{align}
Suppose that the pilot power of each user is equal and satisfies $\left|s_{u1}\right|^2+\cdots+\left|s_{u\tau}\right|^2=\tau$ for $u=1,2,\ldots,U$. 

For a coherent time-frequency resource block, the channel vector between the $u$-th user and the base station can be considered invariant and is denoted by $\mathbf{h}_u$, and $\mathbf{R}_u\triangleq\mathbb{E}\{\mathbf{h}_u\mathbf{h}_u^H\}$ represents the covariance matrix of the channel. 
The multi-path channel of user $u$ can be written as \cite{tse2005fundamentals}
\begin{align}\label{multipath_h}
    \mathbf{h}_u=\frac{1}{{P}} \sum_{p=1}^P \mathbf{e}\left(\theta_{u p},\phi_{up}\right) \alpha_{u p}
\end{align}
where $\alpha_{up}\sim \mathcal{C N}\left(0, \delta_u^2\right)$ represents the complex channel gain of the path $p$ from the base station to user $u$,  where $\delta_u$ is the channel average attenuation of the $u$-th channel, and $P$ is the arbitrary number of paths.

During the CSI acquisition phase, if the field coupling between antennas is not taken into account, the received signal at the base station can be expressed as
\begin{align}
   \mathbf{Y}=\sum_{u=1}^U \mathbf{h}_u \mathbf{s}_u^T+\mathbf{N}, 
\end{align}
where  $\mathbf{N}$ is a $M\times \tau$ additive white Gaussian noise (AWGN) matrix with a zero-mean and element-wise variance of $\epsilon_n^2$. 

However, due to the presence of field coupling between the antennas, the signal received by the array is distorted, thereby the actual channel model needs to be modified accordingly. 
Thus, the actual received signal is given by:
\begin{align}\label{receive_coupling}
   \mathbf{Y}_{\mathrm{c}}=\sum_{u=1}^U \mathbf{C}^T\mathbf{h}_u \mathbf{s}_u^T+\mathbf{N}. 
\end{align}

By vectorizing the received signal and noise, our model \eqref{receive_coupling} can also be written as:
\begin{align}\label{vector_Y}
    \mathbf{y}_c=\tilde{\mathbf{S}} \mathbf{h}+\mathbf{n},
\end{align}
where $\mathbf{y}_c=\operatorname{vec}(\mathbf{Y}_c)$, $\mathbf{n}=\operatorname{vec}(\mathbf{N})$, and $\mathbf{h} \in \mathbb{C}^{UM\times 1}$ is obtained by  concatenating all $U$ channels into a vector. The pilot matrix $\tilde{\mathbf{S}}$ is defined as
\begin{align}
   \tilde{\mathbf{S}} \triangleq \begin{bmatrix}
\mathbf{s}_1 \otimes \mathbf{C}^T& \cdots & \mathbf{s}_U \otimes \mathbf{C}^T
\end{bmatrix}. 
\end{align}
By applying the minimum mean square error (MMSE) estimation to \eqref{vector_Y} \cite{yin2013coordinated}, an estimate of the multi-user channel $\mathbf{h}$ is given by: 
\begin{equation}
   \begin{aligned}\label{EFCA}
    \widehat{\mathbf{h}}&=\mathbf{R} \tilde{\mathbf{S}}^H\left(\tilde{\mathbf{S}} \mathbf{R} \tilde{\mathbf{S}}^H+\epsilon_n^2 \mathbf{I}_{\tau M}\right)^{-1} \mathbf{y}_c,
\end{aligned}
\end{equation}
where $\mathbf{R} \triangleq \operatorname{diag}\left(\mathbf{R}_1, \cdots, \mathbf{R}_L\right)$. Thus, the proposed field coupling aware (FCA) estimator is
\begin{equation}
    \mathbf{E}_{\mathrm{FCA}}\triangleq\mathbf{R} \tilde{\mathbf{S}}^H\left(\tilde{\mathbf{S}} \mathbf{R} \tilde{\mathbf{S}}^H+\epsilon_n^2 \mathbf{I}_{\tau M}\right)^{-1}.
\end{equation}

For the traditional estimator that neglects field coupling, the expression is given by:
\begin{equation}\label{Etrad}
    \mathbf{E}_{\mathrm{Trad}} \triangleq \mathbf{R} {\mathbf{S}}^H\left({\mathbf{S}} \mathbf{R} {\mathbf{S}}^H+\epsilon_n^2 \mathbf{I}_{\tau M}\right)^{-1}.
\end{equation}
where $\mathbf{S}$ is defined as:
\begin{equation}
{\mathbf{S}} \triangleq\left[\begin{array}{lll}
\mathbf{s}_1 \otimes \mathbf{I} & \cdots & \mathbf{s}_U \otimes \mathbf{I}
\end{array}\right].
\end{equation}
By applying the proposed FCA channel estimation method, the CSI of compact arrays can be acquired, which is the basis of superdirective multi-user beamforming.
\section{Superdirective multi-user precoding}\label{III}
In this section, we investigate the multi-user precoding in superdirective antenna arrays.
Our analysis is based on the assumption that all CSI is known, which can be acquired through the method described in the previous section.
\subsection{Multi-user transmission with superdiretive precoding (SP)}
Without loss of generality, suppose the antennas are arranged along the $x$-axis of the coordinate system, and the position of the first antenna is the origin of the coordinate system. The antenna spacing is $d$. Users are all located on the $xy$ plane, i.e., $\theta = 90^\circ$. The steering vector at angle $\phi_{up}$ is
\begin{equation}
\mathbf{e}(\phi_{up}) =\left[\begin{array}{c}
1 \\
e^{-j 2 \pi \frac{d}{\lambda} \cos (\phi_{up})} \\
\vdots \\
e^{-j 2 \pi \frac{(M-1)d }{\lambda} \cos (\phi_{up})}
\end{array}\right],
\end{equation}
Referring to \eqref{D2}, the power gain that user $u$ can obtain with precoding vector $\mathbf{w}_u$ is given by:
\begin{align}\label{Du}
    D_u=\frac{(\mathbf{w}^{\rm SP}_u)^T \mathbf{h}_u\mathbf{h}_u^H{(\mathbf{w}^{\rm SP}_u)^*}}{{(\mathbf{w}^{\rm SP}_u)}^T \mathbf{Z}{(\mathbf{w}^{\rm SP}_u)^*}}.
\end{align}

Unlike the traditional superdirectivity beamforming mentioned in section \ref{Superdirective_beamforming}, which only considers directivity maximization in one direction, the user channel in \eqref{Du} contains multiple paths, i.e., multiple directions. The optimal precoding vector to maximize the user power gain is given by the following lemma.
\begin{lemma}\label{lemma1}
The precoding vector that maximizes the power gain for user $u$ is given by:
\begin{align}\label{wusp}
    \mathbf{w}_u^{\rm SP} = \gamma_u \mathbf{Z}^{-1}\mathbf{h}_u^*,
\end{align}
where $\gamma_u$ is the power control coefficient. $\gamma_u$ is determined by the following equation, for a given radiation power $\rho_u$:
\begin{align}
    \gamma_u=\sqrt{\frac{2\rho_u}{\mathbf{h}_u^H(R_{\rm rad}\mathbf{Z})^{-1}\mathbf{h}_u}},
\end{align}
\end{lemma}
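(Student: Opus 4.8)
The plan is to recognize $D_u$ in \eqref{Du} as a generalized Rayleigh quotient and maximize it by a standard whitening argument, then fix the scalar from the power budget. First I would absorb the outer conjugates: setting $\mathbf{v}\triangleq(\mathbf{w}_u^{\rm SP})^*$ and using that $\mathbf{Z}$ is real, one has $(\mathbf{w}_u^{\rm SP})^T\mathbf{M}(\mathbf{w}_u^{\rm SP})^*=\mathbf{v}^H\mathbf{M}\mathbf{v}$ for both $\mathbf{M}=\mathbf{h}_u\mathbf{h}_u^H$ and $\mathbf{M}=\mathbf{Z}$, so that
\begin{equation*}
D_u=\frac{\mathbf{v}^H\mathbf{h}_u\mathbf{h}_u^H\mathbf{v}}{\mathbf{v}^H\mathbf{Z}\mathbf{v}}=\frac{\bigl|\mathbf{h}_u^H\mathbf{v}\bigr|^2}{\mathbf{v}^H\mathbf{Z}\mathbf{v}}.
\end{equation*}
From the closed form \eqref{zmn}, $\mathbf{Z}$ is a real symmetric Gram matrix of the array steering vectors over the sphere, hence positive definite for distinct antenna positions, so $\mathbf{Z}^{1/2}$ and $\mathbf{Z}^{-1/2}$ exist.

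Next I would substitute $\mathbf{u}\triangleq\mathbf{Z}^{1/2}\mathbf{v}$ and $\mathbf{g}\triangleq\mathbf{Z}^{-1/2}\mathbf{h}_u$, turning $D_u$ into $|\mathbf{g}^H\mathbf{u}|^2/\|\mathbf{u}\|^2$. By the Cauchy--Schwarz inequality this is at most $\|\mathbf{g}\|^2=\mathbf{h}_u^H\mathbf{Z}^{-1}\mathbf{h}_u$, with equality iff $\mathbf{u}$ is parallel to $\mathbf{g}$; undoing the substitutions gives $\mathbf{v}\propto\mathbf{Z}^{-1}\mathbf{h}_u$ and hence, since $\mathbf{Z}^{-1}$ is real, $\mathbf{w}_u^{\rm SP}=\mathbf{v}^*\propto\mathbf{Z}^{-1}\mathbf{h}_u^*$, which is \eqref{wusp} up to the scalar $\gamma_u$. (Equivalently one can note that $\mathbf{Z}^{-1}\mathbf{h}_u\mathbf{h}_u^H$ has rank one, so its single nonzero eigenvalue $\mathbf{h}_u^H\mathbf{Z}^{-1}\mathbf{h}_u$ is the maximum of the quotient and $\mathbf{Z}^{-1}\mathbf{h}_u$ the maximizing direction, or write the Lagrange stationarity condition $\mathbf{h}_u\mathbf{h}_u^H\mathbf{v}=\lambda\mathbf{Z}\mathbf{v}$ and conclude $\mathbf{v}\propto\mathbf{Z}^{-1}\mathbf{h}_u$.)

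Finally, the scalar $\gamma_u$ is pinned down by the transmit power budget, not by $D_u$, which is scale invariant. The denominator $(\mathbf{w}_u^{\rm SP})^T\mathbf{Z}(\mathbf{w}_u^{\rm SP})^*$ of \eqref{Du} is proportional to the total radiated power, the proportionality being $\tfrac12 R_{\rm rad}$ in the adopted normalization. Substituting \eqref{wusp} and collapsing $(\mathbf{Z}^{-1})^T\mathbf{Z}\,\mathbf{Z}^{-1}=\mathbf{Z}^{-1}$ by symmetry of $\mathbf{Z}$ gives $(\mathbf{w}_u^{\rm SP})^T\mathbf{Z}(\mathbf{w}_u^{\rm SP})^*=\gamma_u^2\,\mathbf{h}_u^H\mathbf{Z}^{-1}\mathbf{h}_u$; equating the corresponding radiated power to the prescribed value $\rho_u$ and solving the resulting scalar equation for $\gamma_u$ yields $\gamma_u=\sqrt{2\rho_u/\bigl(\mathbf{h}_u^H(R_{\rm rad}\mathbf{Z})^{-1}\mathbf{h}_u\bigr)}$.

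I do not expect a genuine obstacle here: the core is a textbook generalized Rayleigh-quotient maximization. The only points that require care are (i) the transpose/conjugate-transpose bookkeeping, since \eqref{Du} is written with $(\cdot)^T$ and $(\cdot)^*$ rather than $(\cdot)^H$, and (ii) justifying that $\mathbf{Z}$ is symmetric positive definite so that $\mathbf{Z}^{-1/2}$ and $\mathbf{Z}^{-1}$ exist and the whitening substitution is legitimate, which follows from the closed form in \eqref{zmn} for distinct antenna positions.
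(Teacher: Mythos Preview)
Your proposal is correct. Your primary route---whitening via $\mathbf{Z}^{1/2}$ and then applying Cauchy--Schwarz---is a slightly different (though equally standard) argument from the paper's, which instead casts \eqref{Du} as the generalized eigenvalue problem $\mathbf{Z}^{-1}\mathbf{h}_u\mathbf{h}_u^H(\mathbf{w}_u^{\rm SP})^*=D(\mathbf{w}_u^{\rm SP})^*$, notes that $\mathbf{Z}^{-1}\mathbf{h}_u\mathbf{h}_u^H$ has rank one, and reads off the single nonzero eigenvector. You actually mention this rank-one eigenvalue argument parenthetically as an alternative, so you subsume the paper's approach. Your whitening route has the minor advantage that the maximum value $\mathbf{h}_u^H\mathbf{Z}^{-1}\mathbf{h}_u$ drops out directly from the Cauchy--Schwarz bound, and you also go further than the paper's appendix by explicitly deriving $\gamma_u$ from the radiated-power constraint \eqref{Prad}, which the paper states in the lemma but does not work out in the proof.
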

\begin{proof}
\noindent\emph{Proof:} See Appendix A. 
\end{proof}

 The array radiation power $P_{\rm rad}$ with the precoding vector $\mathbf{w}_u^{\rm SP}$ is
\begin{align}\label{Prad}
    P_{\rm rad} = \frac{1}{2}(\mathbf{w}_u^{\rm{SP}})^T(R_{\rm rad}\mathbf{Z})(\mathbf{w}_u^{\rm{SP}})^*,
\end{align}
where $R_{\rm rad}$ is the radiation resistance of the antenna, which is a fixed value that can be calculated using the method of moments (MoM)\cite{Collin_Zucker_1969}.
Therefore, our first proposed superdirectivity multi-user precoding scheme is to calculate the  precoding vector for each user separately based on  \eqref{wusp}, aiming to maximize the power gain of each user.

Note that the first scheme only maximizes the power gain  of the target user, without imposing any restrictions on the signals of interference users. This can improve spectral efficiency in low signal-to-noise ratio (SNR) scenarios where the signals of interference users are weak. Yet when the signals of interference users dominate, the signal-to-interference-plus-noise ratio (SINR) of the target user  will significantly deteriorate, thereby reducing the overall system spectral efficiency. To this end, we then propose superdiretcive multi-user precoding schemes that impose restrictions on the signals of interference users.
\subsection{Multi-user transmission with interference-nulling superdiretive precoding (INSP)}
To maximize the directivity gain of  user $u$ without creating interference to other users, the following mathematical problem needs to be solved:
\begin{equation}
    \begin{array}{ll}\label{P1}
\max\limits_{\mathbf{w}_u} & \dfrac{\mathbf{w}_u^T \mathbf{h}_u \mathbf{h}_u^H \mathbf{w}_u^*}{\mathbf{w}_u^T \mathbf{Z} \mathbf{w}_u^*} \\
\text { s.t. } & \mathbf{h}_i^T \mathbf{w}_u=0, i=1, \cdots, U, i \neq u
\end{array}
\end{equation}

It is noted that the objective function and constraints of \eqref{P1} are convex. Thus \eqref{P1} is a convex optimization problem that can be solved using methods such as the interior point method and gradient descent method. However, these methods require iteration and have high complexity, which are not suitable for fast time-varying channel. Therefore, we propose a low-complexity solution based on basis transformation for problem \eqref{P1}. The specific procedure is as follows. 

For user $u$, the channel covariance matrix $\mathbf{R}^u_{\rm int}$ of all interference users  is written as
\begin{align}\label{Rint}
    \mathbf{R}^u_{\rm int}\triangleq \sum_{i\neq u}\mathbf{h}_i\mathbf{h}_i^H,
\end{align}
and its eigen-value decomposition (EVD) is
\begin{align}\label{decomp}
    \mathbf{R}^u_{\rm int}=\mathbf{L}_u \boldsymbol{\Sigma} \mathbf{L}_u^H,
\end{align}
where $\mathbf{L}_u$ is the eigenvector matrix such that $\mathbf{L}_u^H \mathbf{L}_u=\mathbf{I}_M$.
\begin{algorithm}[tbp]
	\caption{Proposed multi-user interference-nulling superdirective precoding method}
	\label{alg2}
	\begin{algorithmic}[1]
		\REQUIRE ~~\\
		The CSI matrix $[\mathbf{h}_1,\mathbf{h}_2,\cdots,\mathbf{h}_U]$; Power allocation vector $[\rho_1,\rho_2,\cdots,\rho_U]$; The impedance coupling matrix $\mathbf{Z}$;\\
		\ENSURE ~~\\
		Precoding vector \\
		\FOR{$\mathbf{h}_u\in[\mathbf{h}_1,\mathbf{h}_2,\cdots,\mathbf{h}_U]$}		
		\STATE 
		Compute the channel covariance matrix $\mathbf{R}^u_{\rm int}$ as in \eqref{Rint};\\
		\STATE Obtain $\mathbf{L}_u$ by decomposing $\mathbf{R}^u_{\rm int}$ as in \eqref{decomp};\\
		\STATE Obtain $\boldsymbol{\Xi}_u$ and $\boldsymbol{\eta}_u$ respectively as in \eqref{xiu} and \eqref{etau};\\
        \STATE Compute $\boldsymbol{\alpha}_u$ according to \eqref{alphau};
        \STATE Compute $\bar{\mathbf{w}}_u^{\mathrm{I}}$ as in \eqref{barw} and impose power constraint according to \eqref{Prad};
		\ENDFOR 
		\RETURN $[\bar{\mathbf{w}}_u^{\mathrm{I}},\bar{\mathbf{w}}_u^{\mathrm{I}},\cdots,\bar{\mathbf{w}}_u^{\mathrm{I}}]$.
	\end{algorithmic}
\end{algorithm}
\begin{theorem}\label{theorem1}
Denote the dimension of the interference user space by  $N (N\leq U-1)$.
The optimal solution to problem \eqref{P1} is given by
\begin{equation}\label{barw}
       \bar{\mathbf{w}}^{\rm I}_u = \mathbf{L}_u^*\left(\begin{array}{c}
\mathbf{0}_{N}\\
\boldsymbol{\alpha}_u 
\end{array}\right),
\end{equation}
where $\mathbf{0}_{N}$ is a zero vector of size $N \times 1$, and $\boldsymbol{\alpha}_u \in \mathbb{C}^{(M-N) \times 1}$ is obtained by
\begin{align}
    \boldsymbol{\alpha}_u=\boldsymbol{\Xi}_u^{-1} \boldsymbol{\eta}_u^*,
\end{align}
where $\boldsymbol{\Xi}_u \in \mathbb{C}^{(M-N) \times(M-N)}$ and $\boldsymbol{\eta}_u \in \mathbb{C}^{(M-N) \times 1}$ are given respectively in
\begin{equation}\label{xiu}
\mathbf{L}_u^H \mathbf{Z} \mathbf{L}_u=\left(\begin{array}{ll}
\boldsymbol{\Upsilon} & \mathbf{\Psi} \\
\boldsymbol{\Psi} & \boldsymbol{\Xi}_u
\end{array}\right)
\end{equation}
and
\begin{equation}\label{etau}
\mathbf{L}_u^H \mathbf{h}_u=\left(\begin{array}{c}
\boldsymbol{\gamma} \\
\boldsymbol{\eta}_u
\end{array}\right).
\end{equation}
\end{theorem}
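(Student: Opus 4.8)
The plan is to reduce the constrained maximization in \eqref{P1} to an unconstrained Rayleigh-quotient problem in a lower-dimensional subspace, and then invoke the standard maximum-directivity result \eqref{bara}. First I would use the eigen-decomposition \eqref{decomp}: since $\mathbf{R}^u_{\rm int}=\sum_{i\neq u}\mathbf{h}_i\mathbf{h}_i^H$ has rank $N\le U-1$, exactly $N$ of the eigenvalues in $\boldsymbol{\Sigma}$ are nonzero. Ordering them so the first $N$ columns of $\mathbf{L}_u$ span the interference subspace and the last $M-N$ columns span its orthogonal complement, I would show that the constraint $\mathbf{h}_i^T\mathbf{w}_u=0$ for all $i\neq u$ is equivalent to $\mathbf{w}_u^*$ lying in the orthogonal complement, i.e. $\mathbf{w}_u^* = \mathbf{L}_u\left(\begin{smallmatrix}\mathbf{0}_N\\ \boldsymbol{\alpha}_u\end{smallmatrix}\right)$ for some $\boldsymbol{\alpha}_u\in\mathbb{C}^{(M-N)\times 1}$; equivalently $\mathbf{w}_u=\mathbf{L}_u^*\left(\begin{smallmatrix}\mathbf{0}_N\\ \boldsymbol{\alpha}_u\end{smallmatrix}\right)$, which is exactly the claimed form \eqref{barw}. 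The key observation here is that $\mathbf{h}_i\mathbf{h}_i^H\mathbf{w}_u^*=\mathbf{0}$ for all $i\neq u$ forces $\mathbf{w}_u^*$ into the nullspace of $\mathbf{R}^u_{\rm int}$, and conversely any vector annihilated by $\mathbf{R}^u_{\rm int}$ satisfies $\mathbf{h}_i^H\mathbf{w}_u^*=0$, hence $\mathbf{h}_i^T\mathbf{w}_u=(\mathbf{h}_i^H\mathbf{w}_u^*)^*=0$.

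Next I would substitute this parametrization into the objective. Writing $\mathbf{L}_u^H\mathbf{Z}\mathbf{L}_u$ in the block form \eqref{xiu} and $\mathbf{L}_u^H\mathbf{h}_u$ in the block form \eqref{etau}, a direct computation gives $\mathbf{w}_u^T\mathbf{Z}\mathbf{w}_u^* = \boldsymbol{\alpha}_u^H\boldsymbol{\Xi}_u\boldsymbol{\alpha}_u$ (only the lower-right block survives because the first $N$ entries of the transformed vector are zero) and $\mathbf{w}_u^T\mathbf{h}_u = \boldsymbol{\alpha}_u^H\boldsymbol{\eta}_u^*$, so that $\mathbf{w}_u^T\mathbf{h}_u\mathbf{h}_u^H\mathbf{w}_u^* = |\boldsymbol{\alpha}_u^H\boldsymbol{\eta}_u^*|^2 = \boldsymbol{\alpha}_u^H\boldsymbol{\eta}_u^*\boldsymbol{\eta}_u^T\boldsymbol{\alpha}_u$. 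Thus the problem becomes the unconstrained maximization of the generalized Rayleigh quotient $\boldsymbol{\alpha}_u^H\boldsymbol{\eta}_u^*\boldsymbol{\eta}_u^T\boldsymbol{\alpha}_u / \boldsymbol{\alpha}_u^H\boldsymbol{\Xi}_u\boldsymbol{\alpha}_u$ over $\boldsymbol{\alpha}_u$. Since $\boldsymbol{\Xi}_u$ is a principal submatrix of the congruence-transformed $\mathbf{Z}$, which is positive definite, $\boldsymbol{\Xi}_u$ is invertible; the maximizer of a rank-one-over-positive-definite quotient is the classical result $\boldsymbol{\alpha}_u \propto \boldsymbol{\Xi}_u^{-1}\boldsymbol{\eta}_u^*$ (seen either by the generalized eigenvalue argument or by the Cauchy--Schwarz inequality applied to the $\boldsymbol{\Xi}_u$-inner product), which is precisely \eqref{alphau}. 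The overall scaling is then fixed by the radiation-power constraint \eqref{Prad}, mirroring Lemma \ref{lemma1}.

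The main obstacle, and the step requiring the most care, is establishing the \emph{equivalence} of the feasible set of \eqref{P1} with the subspace parametrization --- in particular making sure the dimension count is right when the interference channels $\mathbf{h}_i$ are linearly dependent (so $N<U-1$), and confirming that restricting to this subspace does not discard any feasible point that could give a larger objective. I would handle this by working directly with $\operatorname{null}(\mathbf{R}^u_{\rm int})$ rather than with the individual constraints, and noting $\operatorname{null}(\mathbf{R}^u_{\rm int}) = \bigcap_{i\neq u}\operatorname{null}(\mathbf{h}_i^H) = \operatorname{span}\{\mathbf{h}_i : i\neq u\}^\perp$. A secondary point worth stating explicitly is that $\boldsymbol{\Xi}_u\succ\mathbf{0}$: this follows because $\mathbf{L}_u$ is unitary, so $\mathbf{L}_u^H\mathbf{Z}\mathbf{L}_u\succ\mathbf{0}$ (as $\mathbf{Z}\succ\mathbf{0}$, being a Gram-type impedance matrix), and every principal submatrix of a positive-definite matrix is positive definite, so $\boldsymbol{\Xi}_u^{-1}$ exists. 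Everything else is routine block-matrix algebra and the one-line Rayleigh-quotient optimization, so the proof is short once the subspace reduction is set up correctly.
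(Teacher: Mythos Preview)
Your proposal is correct and follows essentially the same strategy as the paper: parametrize the feasible set by the orthogonal complement of the interference subspace, reduce the objective to an unconstrained generalized Rayleigh quotient in the $(M-N)$-dimensional coordinates, and invoke the rank-one maximizer. The only difference is cosmetic: the paper first builds an orthonormal basis $\mathbf{V}$ for the interference subspace via Gram--Schmidt on $\{\mathbf{h}_i\}_{i\neq u}$, carries out the block-matrix reduction with $\mathbf{V}$, and only at the very end remarks that the eigenvector matrix $\mathbf{L}_u$ from \eqref{decomp} satisfies the same properties as $\mathbf{V}$; you work directly with $\mathbf{L}_u$ and the nullspace of $\mathbf{R}^u_{\rm int}$ from the outset, which is slightly cleaner and also lets you handle the $N<U-1$ case more transparently. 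Your explicit justification that $\boldsymbol{\Xi}_u\succ\mathbf{0}$ (as a principal submatrix of a unitary congruence of $\mathbf{Z}\succ\mathbf{0}$) is a nice addition the paper leaves implicit.

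One small notational slip: when you write ``$\mathbf{w}_u^* = \mathbf{L}_u\left(\begin{smallmatrix}\mathbf{0}_N\\ \boldsymbol{\alpha}_u\end{smallmatrix}\right)$, equivalently $\mathbf{w}_u=\mathbf{L}_u^*\left(\begin{smallmatrix}\mathbf{0}_N\\ \boldsymbol{\alpha}_u\end{smallmatrix}\right)$'', the two $\boldsymbol{\alpha}_u$'s are conjugates of one another, not the same vector. This is harmless (it is just a relabeling of the free parameter before optimization), but be consistent in the final write-up so the formula $\boldsymbol{\alpha}_u=\boldsymbol{\Xi}_u^{-1}\boldsymbol{\eta}_u^*$ matches the parametrization you commit to.
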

\begin{proof}
\noindent\emph{Proof:} See Appendix B.
\end{proof}

Note that although the above method can obtain the optimal solution to problem \eqref{P1}, if the interfering users are close to the target user channel, i.e., the interference channels are highly correlated with the channel of the target user, the  gain obtained when using the optimal beamforming coefficients for precoding may still be poor. It is worth exploring how to use suitable multiple access schemes, e.g., orthogonal frequency division multiple access (OFDMA), SDMA, for each user to maximize system throughput under superdirective arrays. However, this topic is beyond the scope of this paper.

The method is summarized in \textbf{Algorithm} \ref{alg2}.

\subsection{Multi-user transmission with regularized interference-nulling superdiretive precoding (RINSP)}
Deploying a superdirective antenna array in practical communication systems may encounter two of the major challenges: 1) significant ohmic loss and 2) high precision requirements for the precoding vector. This is mainly due to the tendency of the matrix $\mathbf{Z}$ towards singularity when the antenna spacing is small. 
In order to alleviate these challenges and enhance the practicality of superdirective antenna arrays in multi-user systems, we propose a regularized interference-nulling superdirective multi-user precoding.

\begin{proposition}\label{propositionohimc}
    For an antenna array with each antenna having a radiation resistance of $R_{\rm rad}$ and an ohmic loss resistance of $r_{\rm loss}$, the expression of the power gain for the $u$-th user is:
    \begin{equation}
        \dfrac{\mathbf{w}_u^T \mathbf{h}_u \mathbf{h}_u^H \mathbf{w}_u^*}{\mathbf{w}_u^T \mathbf{Z}_R \mathbf{w}_u^*}
    \end{equation}
   where \begin{equation}
\mathbf{Z}_R=\mathbf{Z}+\frac{r_{\rm loss}}{R_{\rm rad}}\mathbf{I}_M.
\end{equation}
\end{proposition}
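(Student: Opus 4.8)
The plan is to derive the power gain expression directly from the physical power-balance considerations that were used implicitly in the lossless case. Recall that in \eqref{Prad} the radiation power is $P_{\rm rad} = \tfrac12 (\mathbf{w}_u)^T (R_{\rm rad}\mathbf{Z})(\mathbf{w}_u)^*$, where the factor $R_{\rm rad}\mathbf{Z}$ arises because $\mathbf{Z}$ is the normalized impedance coupling matrix and $R_{\rm rad}$ is the per-antenna radiation resistance. When each antenna additionally has an ohmic loss resistance $r_{\rm loss}$, the current $\mathbf{w}_u$ flowing through each element dissipates extra power in the loss resistances. First I would write the ohmic loss power as $P_{\rm loss} = \tfrac12 r_{\rm loss}\,\mathbf{w}_u^T \mathbf{w}_u^* = \tfrac12 r_{\rm loss}\,\|\mathbf{w}_u\|^2$, noting that the loss resistances are uncoupled (they are local to each antenna element), so the corresponding quadratic form is simply $r_{\rm loss}\mathbf{I}_M$ rather than $r_{\rm loss}\mathbf{Z}$.

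Next I would combine the two dissipation mechanisms. The total input power delivered to the array is $P_{\rm in} = P_{\rm rad} + P_{\rm loss} = \tfrac12 \mathbf{w}_u^T(R_{\rm rad}\mathbf{Z} + r_{\rm loss}\mathbf{I}_M)\mathbf{w}_u^*$. Factoring out $R_{\rm rad}$ gives $P_{\rm in} = \tfrac12 R_{\rm rad}\,\mathbf{w}_u^T\big(\mathbf{Z} + \tfrac{r_{\rm loss}}{R_{\rm rad}}\mathbf{I}_M\big)\mathbf{w}_u^* = \tfrac12 R_{\rm rad}\,\mathbf{w}_u^T \mathbf{Z}_R \mathbf{w}_u^*$, which identifies $\mathbf{Z}_R = \mathbf{Z} + \tfrac{r_{\rm loss}}{R_{\rm rad}}\mathbf{I}_M$ as the effective (regularized) impedance matrix. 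The power gain is by definition the ratio of the useful received power of user $u$, proportional to $|\mathbf{w}_u^T\mathbf{h}_u|^2 = \mathbf{w}_u^T\mathbf{h}_u\mathbf{h}_u^H\mathbf{w}_u^*$ (the numerator of \eqref{Du}), to the total expended power $P_{\rm in}$; the common factors $\tfrac12 R_{\rm rad}$ and the proportionality constant in the numerator cancel in the ratio, yielding exactly $\mathbf{w}_u^T\mathbf{h}_u\mathbf{h}_u^H\mathbf{w}_u^* \big/ \mathbf{w}_u^T\mathbf{Z}_R\mathbf{w}_u^*$, as claimed.

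The main obstacle is not any hard computation but rather the bookkeeping of normalization constants and the justification that the loss term enters as $r_{\rm loss}\mathbf{I}_M$: I need to argue carefully that mutual coupling affects only the radiative part of the impedance (captured by $\mathbf{Z}$, whose off-diagonal entries are the mutual radiation resistances $\tfrac{\sin[kd(n-m)]}{kd(n-m)}$) while the ohmic losses are an intrinsic, element-local property of each conductor and hence contribute a diagonal, coupling-independent term. I would make this precise by appealing to the circuit model of the array in which the input impedance seen at each port decomposes as $Z_{mn} = R_{\rm rad}z_{mn} + r_{\rm loss}\delta_{mn} + jX_{mn}$, with the reactive part tuned out by matching, so that the real power quadratic form is governed by $R_{\rm rad}\mathbf{Z} + r_{\rm loss}\mathbf{I}_M$. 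Once that decomposition is granted, the result follows by the one-line factorization above, and the definition of power gain carries over verbatim from Lemma \ref{lemma1} with $\mathbf{Z}$ replaced by $\mathbf{Z}_R$.
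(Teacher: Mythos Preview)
Your argument is correct and essentially the same as the paper's: both write $P_{\rm loss}=\tfrac12 r_{\rm loss}\mathbf{w}_u^T\mathbf{I}_M\mathbf{w}_u^*$, add it to $P_{\rm rad}=\tfrac12 R_{\rm rad}\mathbf{w}_u^T\mathbf{Z}\mathbf{w}_u^*$, factor out $R_{\rm rad}$, and read off $\mathbf{Z}_R$. The only cosmetic difference is that the paper packages the computation via the antenna-theory identity $G_u=\eta\,D_u$ with $\eta=P_{\rm rad}/(P_{\rm rad}+P_{\rm loss})$, whereas you form the received-to-input power ratio directly; the algebra is identical.
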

\begin{proof}
\noindent\emph{Proof:} The radiation efficiency of an antenna array is defined as:
\begin{small}
\begin{equation}\label{effi}
\begin{aligned}
\eta&\triangleq\frac{P_{\rm rad}}{P_{\rm rad}+P_{\rm loss}}\\
&=\frac{\mathbf{w}_u^T R_{\rm rad}\mathbf{Z} \mathbf{w}_u^*}{\mathbf{w}_u^T R_{\rm rad}\mathbf{Z} \mathbf{w}_u^*+\mathbf{w}_u^Tr_{\rm loss}\mathbf{I}_M\mathbf{w}_u^*},
\end{aligned}
\end{equation}
\end{small}
where $P_\text{loss}$ is the array loss power, which is given by\cite{Collin_Zucker_1969}
\begin{equation}
   P_\text{loss} = \frac{1}{2}
\mathbf{w}_u^T r_{\operatorname{loss}} \mathbf{I}_M\mathbf{w}_u^*,
\end{equation}
where $r_{\rm loss}$ is the ohmic loss resistance of the antenna. Therefore, when considering the ohmic loss of the antenna, combining \eqref{Du} and \eqref{effi}, the gain $G_u$ of user $u$ is:
\begin{equation}
\begin{aligned}
G_u&=\eta D_u\\
&=\frac{\mathbf{w}_u^T \mathbf{h}_u \mathbf{h}_u^H\mathbf{w}_u}{\mathbf{w}_u^T (\mathbf{Z}+\frac{r_{\rm loss}}{R_{\rm rad}}\mathbf{I}_M)\mathbf{w}_u^*}.
\end{aligned}
\end{equation}
That is,
\begin{equation}
    \mathbf{Z}_R = \mathbf{Z}+
\frac{r_{\text {loss }}}{R_{\mathrm{rad}}} \mathbf{I}_M.
\end{equation}

which proves \textbf{Proposition} \ref{propositionohimc}.
\end{proof}

With the introduction of regularization, the optimization problem is defined as follows:
\begin{equation}\label{P2}
\begin{array}{ll}
\max\limits_{\mathbf{w}^r_u} & \dfrac{(\mathbf{w}_u^{\rm r})^T \mathbf{h}_u \mathbf{h}_u^H (\mathbf{w}_u^{\rm r})^*}{(\mathbf{w}_u^{\rm r})^T \mathbf{Z}_R (\mathbf{w}_u^{\rm r})^*} \\
\text { s.t. } & \mathbf{h}_i^T \mathbf{w}_u^{\rm r}=0, i=1, \cdots, U, i \neq u.
\end{array}
\end{equation}

For a dipole antenna, the ohmic loss resistance $r_{\rm loss}$ is\cite{balanis2012antenna}:
\begin{equation}\label{rloss}
r_{\mathrm{loss}}=\frac{L}{4 \pi r} \sqrt{\frac{\pi f \mu}{ \sigma}},
\end{equation}
where $L$, $r$, $f$, and $\sigma$ are the length, radius, frequency, and conductivity of the dipole antenna, and $\mu$ is the vacuum permeability. The radiation resistance is approximately:
\begin{equation}\label{Rrad}
R_{r a d} \approx 73 \text{ ohms}.
\end{equation}
 Note that \eqref{P2} has the same form as problem \eqref{P1}, the methods proposed in \textbf{Algorithm} \ref{alg2} can be directly applied to solve it.
\section{Wideband multi-user superdirective precoding}\label{IV}
The superdirective antenna array was previously thought to have a narrow directivity bandwidth since antenna designers generally employed the same beamforming vector (excitation) for the whole target bandwidth. However, in communications systems like orthogonal frequency division multiplexing (OFDM), wideband beamforming can be achieved by applying different beamforming vectors at different frequencies. The reason of the narrow directivity bandwidth caused by using the same beamforming vector at all frequencies can be explained by the following proposition. Firstly, define $f_l$ as the frequency at which the power gain experiences a 3dB decrease, and is lower than the center frequency $f_c$.
\begin{proposition}\label{propositionwideband}
Let $f_c$ denote the center frequency at which the superdirective linear antenna array operates. The array achieves its maximum directivity gain $D_c$ at $f_c$, using the optimal beamforming vector $\mathbf{a}_c$. 
The following relationship holds:
\begin{equation}
	\frac{f_c-f_l}{f_c} \propto \frac{1}{M^2},
\end{equation}
where $M$ is the number of antennas.
\end{proposition}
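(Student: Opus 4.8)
The plan is to fix the optimal excitation $\bar{\mathbf{a}}_c=\gamma\,\mathbf{Z}^{-1}(f_c)\mathbf{e}^*(\phi_0;f_c)$ at the design frequency (cf.\ \eqref{bara}) and to track how the power gain
\begin{equation}
  D(f)=\frac{\big|\bar{\mathbf{a}}_c^{T}\mathbf{e}(\phi_0;f)\big|^{2}}{\bar{\mathbf{a}}_c^{T}\mathbf{Z}(f)\bar{\mathbf{a}}_c^{*}}
\end{equation}
degrades as $f$ moves away from $f_c$. The starting point is that $\mathbf{e}(\phi_0;f)$ depends on $f$ only through the dimensionless phase $u=\tfrac{2\pi f d}{c}\cos\phi_0=kd\cos\phi_0$, which is exactly the variable through which the array response $\mathbf{e}(\phi;f_c)$ depends on the observation angle $\phi$ at $f_c$. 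Hence the numerator of $D(f)$, read as a function of $u$, is the same degree-$(M-1)$ trigonometric polynomial that forms the (unnormalised) spatial radiation pattern of the optimal superdirective beam, and a fractional frequency shift $\delta=(f-f_c)/f_c$ produces the fractional shift $\delta$ in $u$. To leading order one may freeze the denominator at $\bar{\mathbf{a}}_c^{T}\mathbf{Z}(f_c)\bar{\mathbf{a}}_c^{*}=\gamma^{2}D_c$: every entry of $\mathbf{Z}(f)$ is a sinc of an argument at most $2\pi fd(M-1)/c\ll1$, so this quadratic form has logarithmic derivative $O(1)$ in $f$ and varies far more slowly than the pattern itself. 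Thus, near $f_c$, $D(f)$ is proportional to the spatial directivity pattern evaluated at the ``effective angle'' $u=kd\cos\phi_0$, and at $f_c$ it equals $D_c=\mathbf{e}^{H}(\phi_0;f_c)\mathbf{Z}^{-1}(f_c)\mathbf{e}(\phi_0;f_c)$, which in the superdirective regime approaches the Uzkov value $M^{2}$.

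It then remains to estimate the width, in $u$, of this pattern around its peak $u_c=k_cd\cos\phi_0$. Here I would invoke the directivity sum rule: since the array lies on the $x$-axis the pattern is azimuthally symmetric, so $\tfrac12\int_{-1}^{1}D(k_cd\,t)\,dt=1$, i.e.
\begin{equation}
  \int_{-k_cd}^{\,k_cd} D(u)\,du = 2k_cd .
\end{equation}
A Markov-type bound then controls the $3\,$dB set: $\{u:\,D(u)\ge D_c/2\}$ has measure at most $\tfrac{2}{D_c}\int D(u)\,du=\tfrac{4k_cd}{D_c}$, so the connected component of this set containing $u_c$ has length $O(k_cd/D_c)$ and in particular $u_c-u_l=O(k_cd/D_c)$. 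Translating back through $\delta u/u_c=\delta f/f_c$,
\begin{equation}
  \frac{f_c-f_l}{f_c}=\frac{u_c-u_l}{u_c}=O\!\left(\frac{k_cd/D_c}{k_cd\cos\phi_0}\right)=O\!\left(\frac{1}{D_c\cos\phi_0}\right)=O\!\left(\frac{1}{M^{2}}\right).
\end{equation}
For the matching lower bound — that the fractional bandwidth is not even smaller — one writes $D(u)\approx D_c\bigl(1-\kappa_u(u-u_c)^{2}\bigr)$ near the peak and notes that, because the optimal superdirective excitation concentrates essentially all of the visible-region power in the main lobe, the integral above must be supplied by that lobe, which fixes $\kappa_u\sim (D_c/k_cd)^{2}$; hence $\Delta u_{3\mathrm{dB}}\sim k_cd/D_c$ and the fractional bandwidth is $\Theta(1/M^{2})$.

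The routine part is the sum-rule bookkeeping and the change of variables; the genuine obstacle is the quantitative claim that the main lobe carries essentially all the power in the visible window, which is what upgrades the Markov upper bound to a two-sided $\propto 1/M^{2}$ law. Making this rigorous requires the classical asymptotics of the maximum-directivity coefficients $\mathbf{Z}^{-1}(f_c)\mathbf{e}^*(\phi_0;f_c)$ as $d/\lambda\to0$ (the binomial/supergain limit behind Uzkov's theorem), or, equivalently, a second-order Taylor expansion of $D(f)$ at $f_c$ together with an estimate that its normalised curvature $\partial_\delta^{2}D|_{f_c}/D_c$ grows like $M^{4}$ — a computation that hinges on the large norm and sign-alternating structure of $\bar{\mathbf{a}}_c$ when $\mathbf{Z}(f_c)$ is near-singular. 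A more physical shortcut to the same scaling is the stored-energy argument: the fractional bandwidth of a radiator is $\sim 1/Q$, and the radiation quality factor of a superdirective array grows in proportion to its directivity, i.e.\ like $M^{2}$, again giving $(f_c-f_l)/f_c\propto 1/M^{2}$.
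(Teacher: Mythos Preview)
Your argument is correct at the level of rigor the paper itself aims for, and it shares with the paper the same pivotal observation: the array factor depends on frequency and on look angle only through the single variable $u=kd\cos\phi$, so a fractional frequency offset is equivalent to a beam-squint in $u$, and the directivity bandwidth is governed by the width of the spatial main lobe. Where you diverge is in how you quantify that width. The paper invokes the Kraus-type approximation $D\simeq 4\pi/(\theta_{\rm HP}\phi_{\rm HP})$ together with the end-fire rotational symmetry to get $\phi_{\rm HP}\simeq\sqrt{4\pi/D}\sim 1/M$, then maps the angular half-power point to frequency via $\cos(\phi_p+\phi_{\rm HP}/2)/\cos\phi_p$ and Taylor-expands, obtaining the explicit constant $(f_c-f_l)/f_c\simeq\pi/(2M^2)$. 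You instead integrate the directivity over the visible window, $\int_{-k_cd}^{k_cd}D(u)\,du=2k_cd$, and use a Markov bound to get $\Delta u_{3\mathrm{dB}}\lesssim 4k_cd/D_c$ directly, bypassing both the single-lobe assumption behind Kraus's formula and the delicate end-fire approximation of \eqref{sinsim}. Your route is more rigorous for the upper bound and is explicit about freezing the denominator $\bar{\mathbf a}_c^{T}\mathbf Z(f)\bar{\mathbf a}_c^{*}$, a point the paper passes over silently; the paper's route, on the other hand, delivers a concrete proportionality constant and is closer to the antenna-engineering vernacular. You are right to flag the matching lower bound (main-lobe power concentration, or equivalently the $M^4$ growth of the normalised curvature) as the genuine obstacle; the paper's derivation is equally heuristic on that side, since the Kraus formula is itself only an order-of-magnitude relation.
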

\begin{proof}
\noindent\emph{Proof:} For antennas with one narrow
major lobe and negligible minor lobes, the relationship between directivity $D$ and beamwidth  can be approximated as\cite{balanis2012antenna}
\begin{equation}\label{Dsim}
    D\simeq \frac{4\pi}{\theta_{\rm HP}\phi_{\rm HP}},
\end{equation}
where $\theta_{\rm HP}$ and $\phi_{\rm HP}$ are the 3dB beamwidths in the E-plane and H-plane, respectively. 

Generally speaking, the radiation pattern of a superdirective antenna array has only one main lobe, and the sidelobes are very small, which enables an approximate relationship between the directivity and the beamwidth using \eqref{Dsim}. Additionally, as the superdirective linear antenna array operates in the end-fire direction, the radiation pattern possesses rotational symmetry, such that $\theta_{\rm HP}=\phi_{\rm HP}$. Therefore, based on \eqref{Dsim}, the 3dB beamwidth in a superdirective antenna array can be approximated by
\begin{equation}
    \phi_{\rm HP}\simeq  \sqrt\frac{{4 \pi}}{D}.
\end{equation}
Denote $\lambda_c$ and $\lambda_l$ the wavelengths corresponding to $f_c$ and $f_l$, respectively. For the maximum radiation direction $\phi_p$, the steering vector $\mathbf{f}_c\left(\phi_{p}\right)$ at the center frequency is given by
\begin{equation}
    \mathbf{f}_c\left(\phi_{p}\right)=\frac{1}{\sqrt{M}}\left[1, e^{j \pi \phi_{c}}, e^{j \pi 2 \phi_{c}}, \cdots, e^{j \pi\left(M-1\right) \phi_{c}}\right]^T,
\end{equation}
where $\phi_c=\frac{2d}{\lambda_c}\cos(\phi_p)$. The steering vector corresponding to the half-power point is 
\begin{equation}
    \mathbf{f}_c\left(\phi_{p}+\frac{\phi_{\rm HP}}{2}\right)=\frac{1}{\sqrt{M}}\left[1, e^{j \pi \phi'_{c}}, e^{j \pi 2 \phi'_{c}}, \cdots, e^{j \pi\left(M-1\right) \phi'_{c}}\right]^T,
\end{equation}
where $\phi'_c=\frac{2d}{\lambda_c}\cos(\phi_p+\frac{\phi_{\rm HP}}{2})$. 
For the same maximum radiation direction $\phi_p$, as the frequency changes from $f_c$ to $f_l$, the steering vector becomes
\begin{equation}
   \mathbf{f}_l\left(\phi_{p}\right)=\frac{1}{\sqrt{M}}\left[1, e^{j \pi \phi_{l}}, e^{j \pi 2 \phi_{l}}, \cdots, e^{j \pi\left(M-1\right) \phi_{l}}\right]^T. 
\end{equation}
where $\phi_l=\frac{2d}{\lambda_l}\cos(\phi_p)$. The phenomenon of the steering vector changing with frequency is referred to as the beam split effect\cite{dai2022delay}. Thus, the half-power frequency  $f_l$ can be deduced from the following equation:
\begin{equation}\label{sineq}
    \frac{2d}{\lambda_c}\cos(\phi_p+\frac{\phi_{\rm HP}}{2})=\frac{2 d}{\lambda_l} \cos \left(\phi_p\right),
\end{equation}
and
\begin{equation}
\begin{aligned}\label{sinsim}
    \cos(\phi_p+\frac{\phi_{\rm HP}}{2})&=\cos\phi_p\cos{\frac{\phi_{\rm HP}}{2}}-\sin{\phi_p}\sin\frac{\phi_{\rm HP}}{2}\\
    &\simeq\cos\phi_p\cos{\frac{\phi_{\rm HP}}{2}}.
\end{aligned}
\end{equation}
This approximation is valid since the superdirective linear antenna array operates in the end-fire direction, i.e., near $0^{\circ}$ or $180^\circ$, therefore $|\cos(\phi_p)|\gg|\sin(\phi_p)|$. And for highly directional radiation patterns, the 3dB beamwidth is very narrow, hence $\sin(\frac{\phi_{\mathrm{HP}}}{2})\approx0$. Combining \eqref{sineq} and \eqref{sinsim}, we have
\begin{equation}
\begin{aligned}
    f_l&=f_c\cos\frac{\phi_{\rm HP}}{2}\\
    &=f_c\cos(\frac{1}{2}\sqrt\frac{{4 \pi}}{D}).
\end{aligned}
\end{equation}
Since the maximum directivity of the superdirective antenna array is $M^2$, thus
\begin{small}
\begin{equation}
\begin{aligned}
    f_c-f_l&=f_c-f_c\cos(\frac{\sqrt{4\pi}}{2M})\\
    &=f_c(1-\cos(\frac{\sqrt{4\pi}}{2M})\\
    &=f_c((\frac{\sqrt{4\pi}}{2M})^2\cdot \frac{1}{2!}+o(\frac{1}{16M^4}))\\
    &\simeq \frac{\pi f_c}{2M^2}.
\end{aligned}
\end{equation}
\end{small}
Therefore, we have
\begin{equation}
    \frac{f_c-f_l}{f_c} \propto \frac{1}{M^2},
\end{equation}
which proves \textbf{Proposition} \ref{propositionwideband}.
\end{proof}

\textit{Remarks:} \textbf{Proposition} \ref{propositionwideband} indicates that the relative distance between the center frequency and the half-power frequency is inversely proportional to the square of the number of antennas in superdirective antenna arrays. Specifically, the directivity bandwidth of a superdirective antenna array will significantly decrease with an increase in the number of antennas. To alleviate this problem, it is necessary to study wideband beamforming techniques.

Assuming that the bandwidth of the system is $B$ and its center frequency is $f_c$, and there are $N_f$ subcarriers. The set of frequencies of interest is $\Omega_f=\{f_1, f_2, \ldots, f_{N_f}\}$. For user $u$, the set of precoding vectors corresponding to each frequency  is given by:
\begin{equation}
\mathbf{W}_u=\mathrm{diag}\{\mathbf{w}_{f_1}^u,\mathbf{w}_{f_2}^u,\ldots,\mathbf{w}^u_{f_{N_f}}\},
\end{equation}
where $\mathbf{w}_{f_i}^u$ denotes the precoding vector for user $u$ at frequency  $f_i$.

The corresponding set of channels is given by:

\begin{equation}
\mathbf{H}_u=\mathrm{diag}\{\mathbf{h}_{f_1}^u,\mathbf{h}_{f_2}^u,\ldots,\mathbf{h}_{f_{N_f}}^u\},
\end{equation}
where $\mathbf{h}_{f_i}^u$ denotes the channel between the transmitter and user $u$ at frequency  $f_i$.
The impedance coupling matrix $\mathbf{Z}(f_i)$ is a matrix whose elements are given by:
\begin{equation}
z^{(f_i)}_{m n}=\left\{\begin{array}{cc}
\frac{\sin [k_i d(n-m)]}{k_i d(n-m)}, & n \neq m \\
1, & n=m
\end{array}\right.,
\end{equation}
where $k_i=\frac{2\pi}{\lambda_i}=\frac{2\pi f_i}{\mathrm{c}}$,  $\mathrm{c}$ is the speed of light, and $\lambda_i$ is the wavelength corresponding to frequency $f_i$.

Thus, under the wideband condition, the optimization problem is formulated as:
\begin{equation}
\begin{aligned}\label{maxW}
\max&\quad \sum_{i=1}^{N_f}\frac{(\mathbf{w}_{f_i}^u)^T(\mathbf{h}_{f_i}^u)^H\mathbf{h}_{f_i}^u(\mathbf{w}_{f_i}^u)^*}{(\mathbf{w}_{f_i}^u)^T\mathbf{Z}(f_i)(\mathbf{w}_{f_i}^u)^*}  \\
\text { s.t. } & \mathbf{H}_j^T \mathbf{\mathbf{W}_u}=\mathbf{0}, j=1, \cdots, U, j \neq u
\end{aligned}
\end{equation}
Due to the independence between frequencies, the optimal precoding vector for each frequency can be obtained directly using the method proposed in \textbf{Algorithm }\ref{alg2}, resulting in a collection of beamforming vectors that form the optimal solution to the problem \eqref{maxW}.

\section{Numerical results}\label{V}
In this section, we evaluate the performance of our proposed methods using numerical analysis software and full-wave electromagnetic simulation software. Two well-known multi-user precoding schemes, namely MRT  and ZF precoding, are used as benchmarks for the comparison of the spectral efficiency under multi-user scenarios.

To verify the effectiveness of the proposed field coupling aware channel estimation method, we first compute the field-coupling matrix of an antenna array using the method proposed in \cite{han2023superdirective}. We set the number of antennas and the spacing between them to $M=8$ and $d=0.2\lambda$, respectively. The number of users is set to $U=5$, and the arrival angles of each user are uniformly distributed in $[0, \pi]$. The channel $\mathbf{h}_i,i=1,\ldots,U,$ of each user is modeled using \eqref{multipath_h}. By replacing the steering vector $\mathbf{e}(\theta_{up},\phi_{up})$ in \eqref{multipath_h} with the distorted steering vector obtained from the full-wave simulation software, we obtain the distorted channel $\mathbf{h}_i^\mathrm c,i=1,\ldots,U,$ for each user affected by field coupling. Specifically, the true received signal for user $u$ can be expressed as:
\begin{equation}
   \mathbf{y}_u = \mathbf{h}_u^\mathrm c\mathbf{s}_u^T+\mathbf{N},
\end{equation}
where the elements of $\mathbf{N}$ are distributed as complex Gaussian random variables with mean $0$ and variance $\epsilon^2$. We estimate the channel $\mathbf{h}_u$ with the true received signal using the proposed FCA estimation method and the traditional MMSE method respectively, and evaluate the performance using the normalized channel estimation error defined as:
\begin{equation}
\operatorname{err} \triangleq 10 \log _{10}\left(\frac{1}{U}\sum_{i=1}^U\frac{\left\|\widehat{\mathbf{h}}_{i}-\mathbf{h}_{i}\right\|_F^2}{\left\|\mathbf{h}_{i}\right\|_F^2}\right)
\end{equation}
\begin{figure}[tbp]
  \centering
  \includegraphics[width=0.5\textwidth]{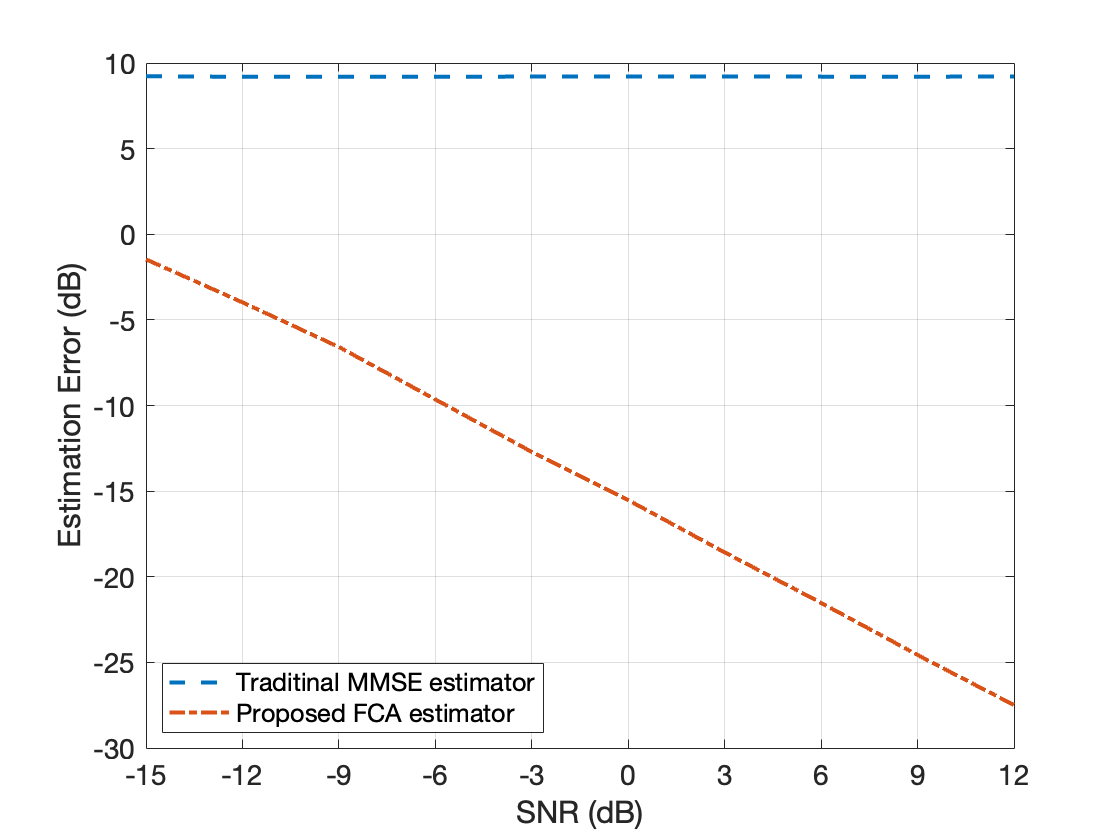}
  \caption{The channel estimation errors  vs. SNR.}
  \label{channel_est}
\end{figure}

We define $\mathrm{SNR}=\frac{\tau}{\epsilon^2}$, where $\tau$ is the pilot power, which is assumed common for all users. We simulate channel estimation errors at different SNRs, performing 2000 random channel realizations for each SNR level, and present the average results in Fig. \ref{channel_est}. We can see that the performance of the traditional estimator deteriorates due to its inability to account for the strong field coupling caused by narrow antenna spacing. The estimation error remains almost unchanged with increasing SNR, indicating that the traditional estimator cannot accurately estimate the channel in practical strong coupling arrays. On the other hand, our proposed FCA estimator is capable of handling channel estimation in compact arrays, with significantly reduced estimation errors compared to the traditional estimator. Moreover, the FCA estimator exhibits noticeable performance improvement with increasing SNR.

\begin{figure}[tbp]
\centering
\includegraphics[width=0.5\textwidth]{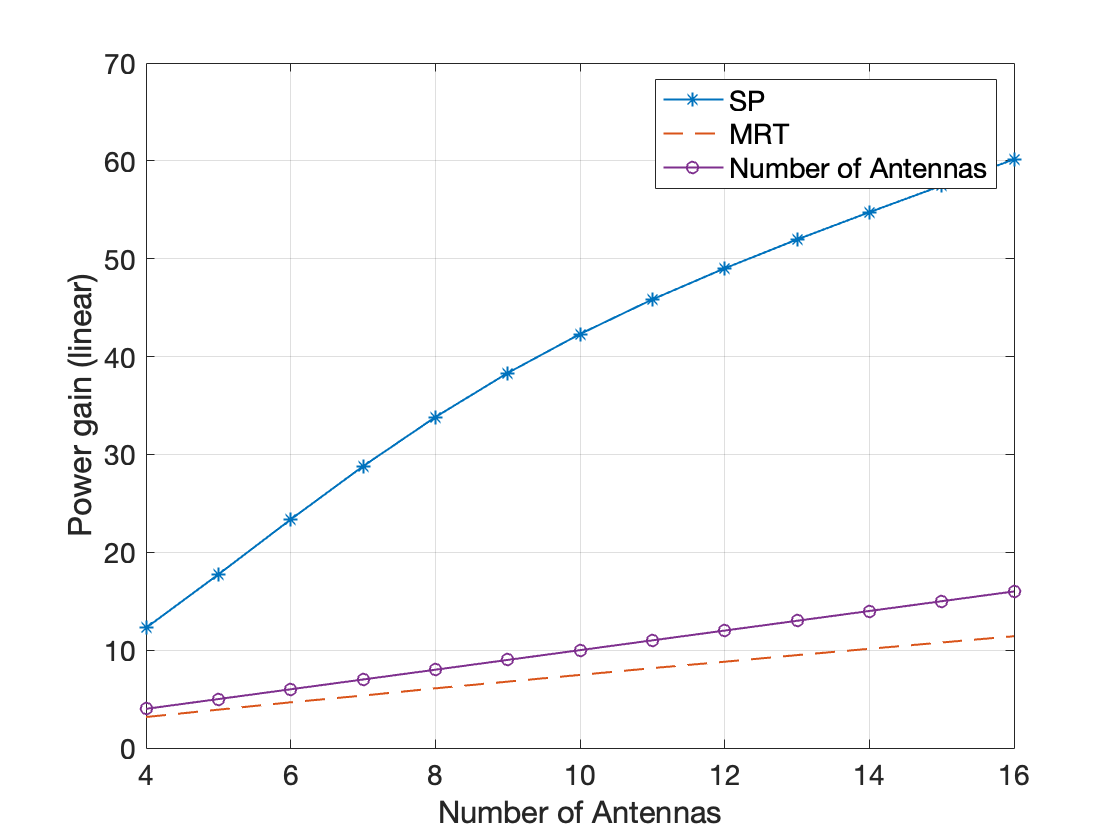}
  \caption{Comparison of power gain between superidirective precoding (SP) and MRT in a multipath scenario.}
  \label{multipath_comp}
\end{figure}

Fig. \ref{multipath_comp} shows the power gain vs. the number of antennas in a multipath scenario under different precoding strategies. The simulation is set to a single-user communication scenario with four paths and the channel model is given by \eqref{multipath_h}. The channel gains of all paths are normalized to 1. Since the superdirectivity antenna array works in the vicinity of end-fire direction, the arrival angle of each path is set to uniformly distributed in the range of $[0^\circ, 20^\circ]$ in the simulation. The results in Fig. \ref{multipath_comp} is the mean of 1000 random trials. As shown in Fig. \ref{multipath_comp}, the power gain under the SP strategy is significantly higher than that under the MRT strategy. However, since the multipath scenario contains paths that are not in the end-fire direction, the power gain does not reach $M^2$. In addition, it can be observed that the power gain of MRT is close to the number of antennas, which verifies the common sense that MRT can only achieve a power gain of up to $M$.

\begin{figure}[tbp]
  \centering
\includegraphics[width=0.5\textwidth]{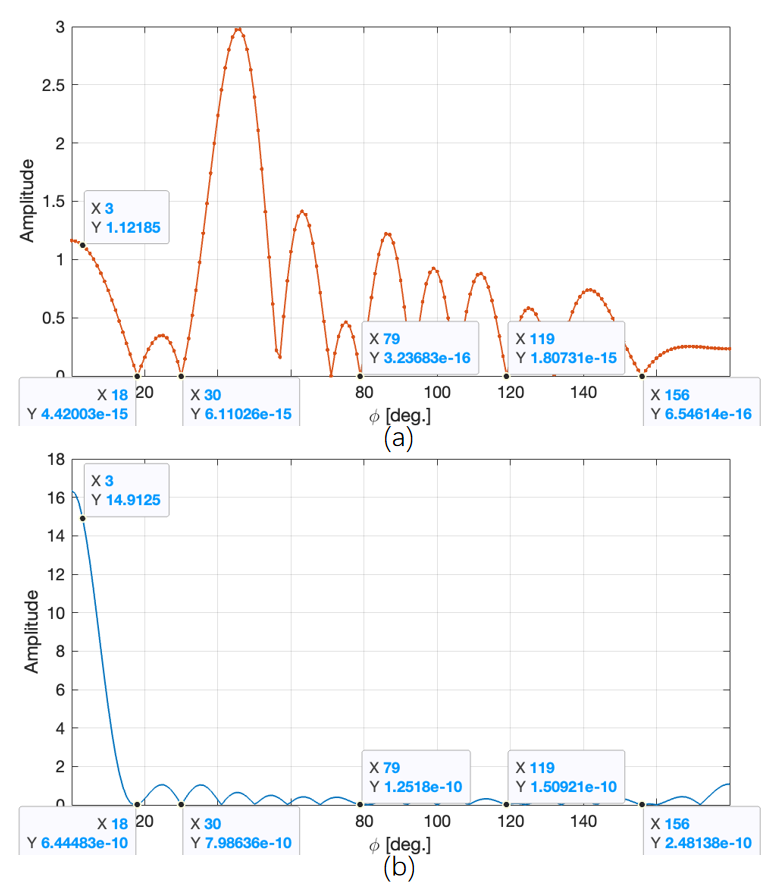}
  \caption{Radition pattern generated by (a) ZF precoding method and (b) proposed INSP precoding method.}
  \label{musers}
\end{figure}
Then, the multi-user simulations are conducted to investigate the performance of the proposed methods. Two classic multi-user precoding methods, namely MRT and ZF precoding, serve as benchmarks. Their expressions are respectively:
\begin{equation}
    \mathbf{W}^{\rm{MRT}} = \gamma_{\rm m}\mathbf{H}^H
\end{equation}
and
\begin{equation}
    \mathbf{W}^{\rm{ZF}} = \gamma_{\rm z}\mathbf{H}^H\left(\mathbf{H} \mathbf{H}^H\right)^{-1},
\end{equation}
where $\mathbf{H} = [\mathbf{h}_1,\mathbf{h}_2,\cdots,\mathbf{h}_U]$ is the channel matrix, the two scalars $\gamma_{\rm m}$ and $\gamma_{\rm z}$ are determined by the power constraints. In  simulations, the transmit power is equally allocated to each user. As the radiation resistance of the antenna is a fixed value, according to \eqref{Prad}, the transmit power is constrained using the following expression:
\begin{equation}
\mathbf{a}^H_u\mathbf{Z}\mathbf{a}_u=1,u=1,2,\cdots,U,
\end{equation}
where $\mathbf{a}_u$ is the precoding vector for the $u$-th user.
The performance metric is the spectral efficiency defined as
\begin{equation}
\mathrm{SE}\triangleq\sum_{u=1}^U \log _2\left(1+\frac{\left|\mathbf{h}_u^T \mathbf{a}_u\right|^2}{\sum_{j \neq u}\left|\mathbf{h}_j^T \mathbf{a}_j\right|^2+\epsilon^2}\right),
\end{equation}

In the multi-user simulation scenario, we set the number of antennas as $M=20$ with an antenna spacing of $d=0.25\lambda$.  To validate Algorithm \ref{alg2}, without loss of generality, we set the number of users as $U=6$, with the arrival angles of $\phi=3^\circ,18^\circ,30^\circ,79^\circ,156^\circ$, and $119^\circ$, respectively for each user. The target user has an arrival angle of $\phi=3^\circ$.  The radiation patterns obtained using both ZF precoding method and the proposed INSP precoding method are shown in Fig. \ref{musers}. It can be observed that both precoding methods achieve nulling of interfering users. However, the radiation pattern obtained by the ZF precoding method has larger sidelobes, and the received signal power at the target user is relatively small. On the other hand, our proposed INSP method utilizes the superdirectivity of the antenna array under strong coupling, significantly improving the received signal power at the target user.

Fig. \ref{muser20_8} depicts the  spectral efficiency as a function of SNR for different precoding schemes, where the number of users is set to 8 and the SNR is defined as $\mathrm{SNR}\triangleq\frac{U}{\epsilon^2}$. Considering that the superdirective antenna array works in the end-fire direction, the arrival angles of the users are set to be in the vicinity of the end-fire direction, i.e., around $0^\circ$ and $180^\circ$. 
In the simulation, the arrival angle distribution of four users satisfies a uniform distribution within $[(10(i-1))^\circ,(10i)^\circ], i=1,\cdots,4$, and the arrival angle distribution of the other four users satisfies a uniform distribution within $[(140+10(i-1))^\circ,(140+10i)^\circ], i=1,\cdots,4$. The results in Fig. \ref{muser20_8} is the mean of 1000 random channel realizations.

\begin{figure}[tbp]
  \centering
\includegraphics[width=0.5\textwidth]{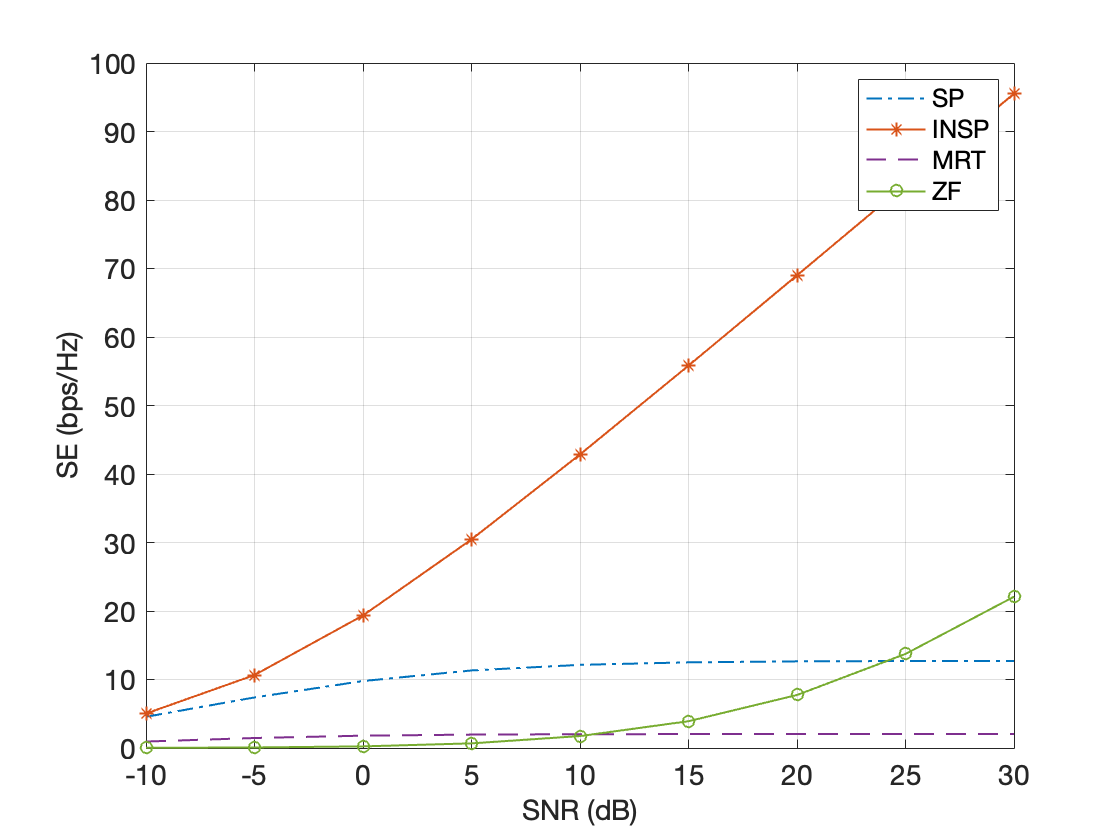}
  \caption{Spectral efficiency vs. SNR for various precoding schemes with 20 antennas, 0.25$\lambda$ antenna spacing, and 8 users.}
  \label{muser20_8}
\end{figure}

In Fig. \ref{muser20_8}, it can be seen that the proposed SP scheme performs well at low SNR, while the interference from the users grows with the increase of SNR.   
The INSP scheme performs the best among all schemes, due to the interference-nulling ability. It can be seen that at high SNR, the spectral efficiency achievable by the INSP scheme is 4 to 5 times that of the ZF scheme, demonstrating the capability of the superdirective multi-user schemes to significantly improve communication rate.

\begin{figure}[tbp]
  \centering
\includegraphics[width=0.5\textwidth]{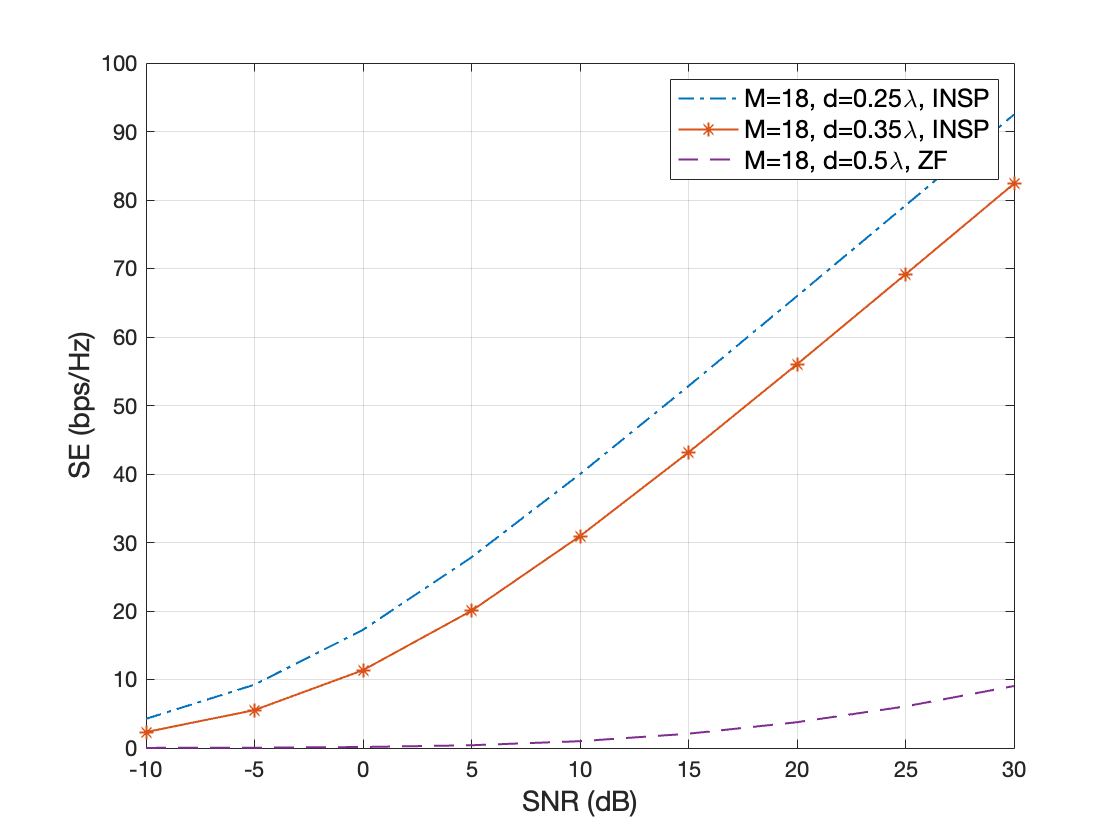}
  \caption{Spectral efficiency vs. SNR for different antenna array apertures with the same number of antennas, where the number of users is 8.}
  \label{number}
\end{figure}

In Fig. \ref{number}, the spectral efficiency is evaluated for different array apertures while maintaining the same number of antennas. From the figure, it is evident that by employing a superdirective antenna array with a reduced aperture ($M=18, \, d=0.25\lambda$), a remarkable improvement in spectral efficiency, up to 9 times greater than that of a traditional MIMO array ($M=18, \, d=0.5\lambda$), can be achieved. As a result, superdirectivity antenna arrays  can improve spectral efficiency, and in the meantime  greatly reduce the array aperture.

Next, we consider  a superdirective antenna array with ohmic loss. The antenna element used is the dipole antenna operating at $1.6$  GHz, with a copper material having a conductivity of $\sigma=5.8\times 10^7$ S/m and a magnetic permeability of $4\pi\times 10^{-7}$ H/m. The length and radius of the antenna are $85$ mm and $0.75$ mm, respectively. According to \eqref{rloss} and \eqref{Rrad}, the radiation resistance and loss resistance of the antenna element can be calculated, respectively. The antenna array consists of 20 elements with a spacing of 0.25$\lambda$. The number of users is 8, and the arrival angle distribution is the same as the simulation conditions in Fig. \ref{muser20_8}.  Considering the  ohmic loss, we enforce a constraint on the transmit power using $\mathbf{a}^H_u\mathbf{Z}_R\mathbf{a}_u=1,u=1,\cdots,U$. The simulation results are shown in Fig \ref{lossmuser20_8}. It can be seen that by incorporating the impact of ohmic loss into the model for optimization, the performance of the RINSP scheme is the best among all schemes, achieving a spectral efficiency close to twice that of the ZF scheme at high SNR. 
However, the gain is not as significant as in Fig. \ref{number}. This also reflects that ohmic loss is an important factor that limits the implementation of superdirective antenna arrays in practical applications. In the future, this problem may be addressed by superconducting antennas.

\begin{figure}[tbp]
  \centering
  \includegraphics[width=0.5\textwidth]{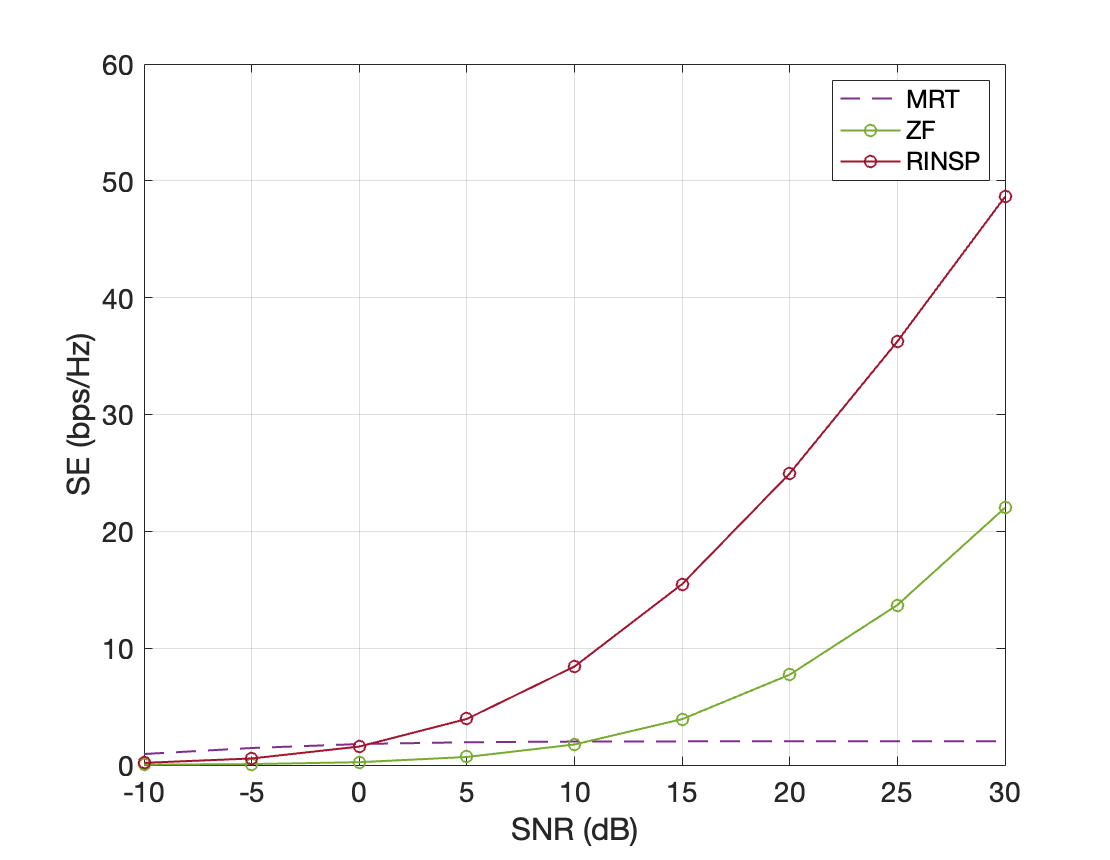}
  \caption{Considering the ohmic loss, spectral efficiency vs. SNR for various precoding schemes with 20 antennas and 0.25$\lambda$ antenna spacing, 8 users.}
  \label{lossmuser20_8}
\end{figure}

\begin{figure}[tbp]
  \centering
\includegraphics[width=0.5\textwidth]{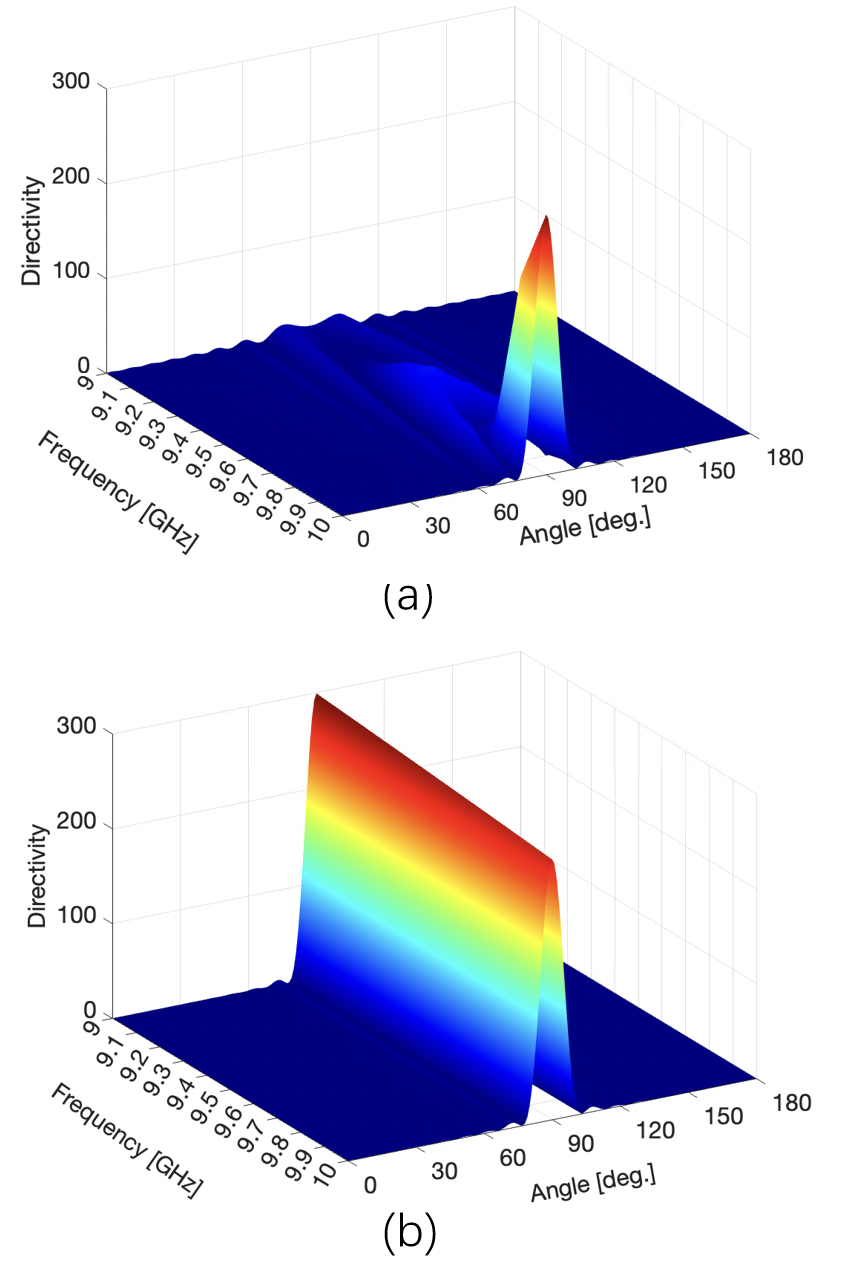}
  \caption{Directivity patterns (a) without wideband beamforming and (b) with wideband beamforming.}
  \label{broadband}
\end{figure}

In Fig. \ref{broadband}, we demonstrate the narrow-band characteristics of the superdirective antenna array and the importance of using wideband beamforming. Fig. \ref{broadband} (a) illustrates the directivity pattern produced by applying the optimal beamforming coefficients at 10 GHz to the frequency range of 9-10 GHz. As expected, the maximum directivity gain sharply decreases as the frequency deviates from 10 GHz, consistent with \textbf{Proposition} \ref{propositionwideband}. Using the superdirective beamforming  proposed in Section \ref{IV}, the directivity pattern obtained is shown in Fig. \ref{broadband} (b). It can be observed that a maximum directivity gain of approximately 300 is achieved at each frequency, validating the effectiveness of the proposed method.

\begin{figure}[tbp]
  \centering
\includegraphics[width=0.5\textwidth]{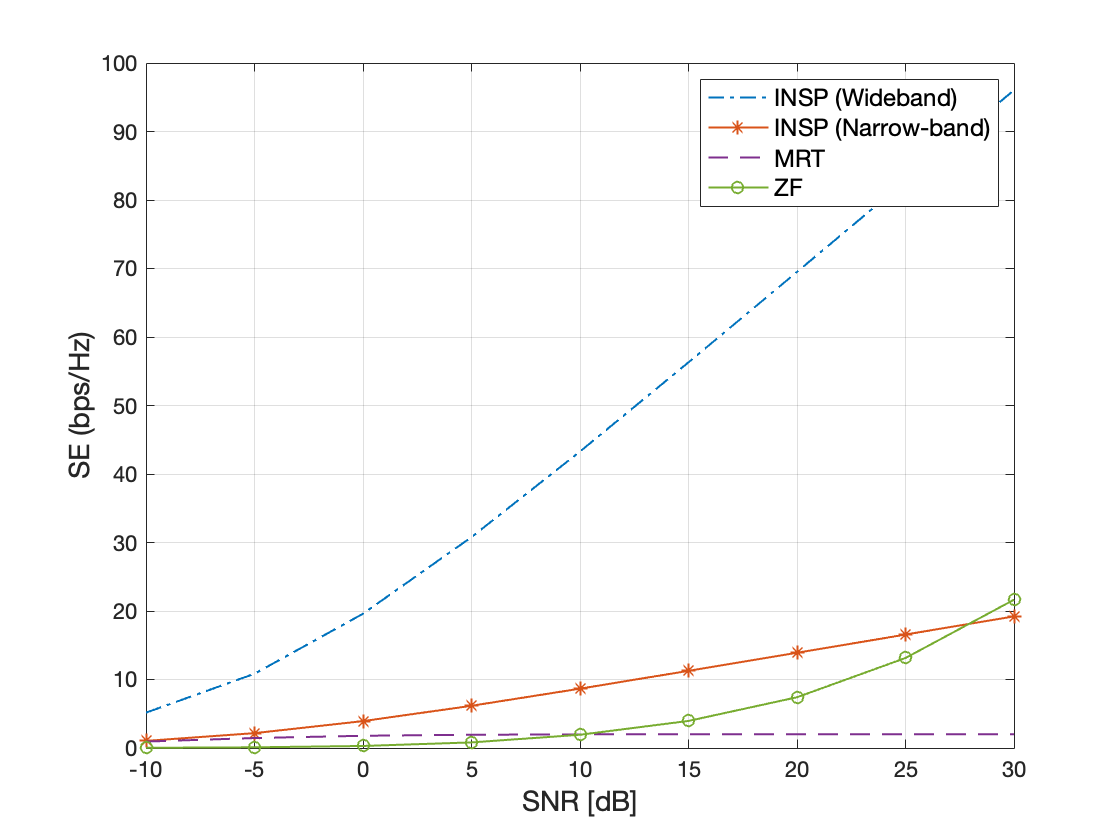}
  \caption{Comparison of the spectral efficiency using the wideband precoding and the narrow-band precoding.}
  \label{widebandcomp}
\end{figure}
In Fig. \ref{widebandcomp}, the spectral efficiencies of the wideband superdirective method and the narrow-band superdirective beamforming method are shown, where the number of subcarriers  is $N_f=5$, the center frequencies are $[9.4,9.7,10,10.3,10.6]$ GHz, the number of antennas is 20, with a spacing of 0.25$\lambda$, and the number of users is 8. The wideband superdirective precoding vector is calculated using the method proposed in section \ref{IV}, while the narrow-band precoding applies the precoding vector at 10 GHz to all frequencies. It can be seen that, compared to wideband beamforming approach, using narrowband beamforming in superdirective arrays results in a significant reduction in spectral efficiency. This phenomenon can also be interpreted by Fig. \ref{broadband}. The use of wideband superdirective precoding can enhance directivity bandwidth, thereby improving the overall spectral efficiency.

\section{Conclusion}
In this paper, we studied the superdirectivity-enhanced multi-user wireless communications. By taking into account the mutual coupling effect in compact arrays, the proposed field-coupling-aware channel estimation method can accurately estimate the coupling channel. For scenarios with multiple users, we propose a series of superdirective multi-user precoding methods targeting various use cases.
The SP method extends our previous single-user single-path coupling-based superdirective beamforming approach to the case of multipath for multiple users.
The INSP method represents the optimal solution to maximize power gain for each user without creating interference to other users. 
The RINSP method takes into account the ohmic losses in the array. Finally, a wideband precoding approach method was introduced, addressing the narrow bandwidth issue of superdirective arrays. Simulation results demonstrated that the proposed methods significantly outperform the state-of-the-art methods. In particular, our study showed the superdirective antenna array assisted communication system may achieve up to 9 times higher spectral efficiency compared with traditional MIMO, while reducing the array aperture by half. 

\appendices
\section{Proof of Lemma 1}\label{Appen. A}
The form of \eqref{D2} is that of a generalized Rayleigh quotient, and its optimization problem can be solved by means of the following generalized eigenvalue problem\cite{li2015rayleigh}:
\begin{equation}\label{gep}
\mathbf{Z}^{-1} \mathbf{h}_u \mathbf{h}_u^H (\mathbf{w}^{\rm SP}_u)^*=D(\mathbf{w}^{\rm SP}_u)^*.
\end{equation}
The maximum eigenvalue of the above eigenvalue problem corresponds to the maximum value of \eqref{Du}, and the corresponding eigenvector is the optimal precoding vector. Based on the property of rank relationships in matrix multiplication, we have
\begin{align}
    \rm{rank}(\mathbf{Z}^{-1}\mathbf{h}_u\mathbf{h}_u^H)\leq \min\{\rm{rank}(\mathbf{Z}^{-1}),\rm{rank}(\mathbf{h}_u\mathbf{h}^H_u)\}.
\end{align}
Since $\rm{rank}(\mathbf{h}_u\mathbf{h}^H_u)=1$, the relationship holds:
\begin{align}
    \rm{rank}(\mathbf{Z}^{-1}\mathbf{h}_u\mathbf{h}_u^H)\leq 1.
\end{align}
In addition, if $\rm{rank}(\mathbf{Z}^{-1}\mathbf{h}_u\mathbf{h}_u^H)=0$, it implies that the maximum value of \eqref{Du} is 0, which is not consistent with reality, therefore
\begin{align}
    \rm{rank}(\mathbf{Z}^{-1}\mathbf{h}_u\mathbf{h}_u^H)= 1.
\end{align}
As a result,  \eqref{gep} has only one non-zero eigenvalue, which is equal to the maximum value of  \eqref{Du}.
Since
\begin{equation}
    \begin{aligned}\label{zhhw}
    \mathbf{Z}^{-1} \mathbf{h}_u \mathbf{h}_u^H (\mathbf{w}^{\rm SP}_u)^*&=\mathbf{Z}^{-1} \mathbf{h}_u (\mathbf{h}_u^H (\mathbf{w}^{\rm SP}_u)^*)\\
    &=\gamma \mathbf{Z}^{-1} \mathbf{h}_u
\end{aligned}
\end{equation}

The second equality follows from the associative property of matrix multiplication. Combining \eqref{gep} and \eqref{zhhw}, we have
\begin{equation}
    \begin{aligned}
    \mathbf{w}^{\rm SP}_u&=\frac{\gamma}{D}\mathbf{Z}^{-1}\mathbf{h}_u^*\\
    &=\gamma_u\mathbf{Z}^{-1}\mathbf{h}_u^*,
\end{aligned}
\end{equation}
which proves \textbf{Lemma} \ref{lemma1}.\qed
\section{Proof of Theorem 1}\label{Appen. B}
\begin{proof}
Firstly, we perform Schmidt orthogonalization on the channels of all users except the target user $u$. Let the matrix $\left[ \boldsymbol{\nu }_1,\cdots ,\boldsymbol{\nu }_{U-1} \right] \in \mathbb{C} ^{M\times \left( U-1 \right)}$ consist of the channel vectors of the $U-1$ interfering users.  Thus, the constraint of problem \eqref{P1} can be rewritten as
\begin{equation}\label{vw0}
\boldsymbol{\nu }_i^T \mathbf{w}_u^{\mathrm{I}}=0, i=1, \cdots, U-1.
\end{equation}
The Schmidt orthogonalization process for channels of interfering users can be expressed as follows\cite{horn2012matrix}:
\begin{align}
    \left\{ \begin{array}{l}
	\bar{\boldsymbol{\nu}}_1=\frac{\boldsymbol{\nu }_1}{\parallel \boldsymbol{\nu }_1\parallel}\\
	\bar{\boldsymbol{\nu}}_2=\boldsymbol{\nu }_2-\frac{\bar{\boldsymbol{\nu}}_{1}^{H}\boldsymbol{\nu }_2}{\parallel \bar{\boldsymbol{\nu}}_1\parallel}\bar{\boldsymbol{\nu}}_1\\
	\bar{\boldsymbol{\nu}}_3=\boldsymbol{\nu }_3-\frac{\bar{\boldsymbol{\nu}}_{1}^{H}\boldsymbol{\nu }_3}{\parallel \bar{\boldsymbol{\nu}}_1\parallel}\bar{\boldsymbol{\nu}}_1-\frac{\bar{\boldsymbol{\nu}}_{2}^{H}\boldsymbol{\nu }_3}{\parallel \bar{\boldsymbol{\nu}}_2\parallel}\bar{\boldsymbol{\nu}}_2\\
	\vdots\\
	\bar{\boldsymbol{\nu}}_{U-1}=\boldsymbol{\nu }_{U-1}-\frac{\bar{\boldsymbol{\nu}}_{1}^{H}\boldsymbol{\nu }_{U-1}}{\parallel \bar{\boldsymbol{\nu}}_1\parallel}\bar{\boldsymbol{\nu}}_1-...-\frac{\bar{\boldsymbol{\nu}}_{U-2}^{H}\boldsymbol{\nu }_{U-1}}{\parallel \bar{\boldsymbol{\nu}}_{U-2}\parallel}\bar{\boldsymbol{\nu}}_{U-2}\\
\end{array} \right. 
\end{align}
    
The dimension of the interference user space is represented by $N(N\leq U-1)$, i.e., $N=\rm{rank}\{\left[ \boldsymbol{\nu }_1,\cdots ,\boldsymbol{\nu }_{U-1} \right]\}$. The above orthogonalization process will produce $N$ non-zero vectors, denoted by $\mathbf{v}_i, i=1,\cdots,N$. We next construct a unitary matrix $\mathbf{V}$ of size $M\times M$ such that the first $N$ columns of $\mathbf{V}$ form a standard orthogonal basis for the interference user channels. Thus, the first $N$ columns of $\mathbf{V}$ can be corresponded to the normalization of these non-zero vectors, i.e., $\dfrac{\mathbf{v}_i}{\|\mathbf{v}_i\|},i=1,\cdots,N$. And $\mathbf{V}$ satisfies
\begin{align}\label{vv}
\mathbf{V}\mathbf{V}^H=\mathbf{I}_M.
\end{align}
By means of $\mathbf{V}$, we can perform a basis transformation on problem \eqref{P1}. Specifically, we let
\begin{equation}\label{vw}
\mathbf{V}^T \mathbf{w}_u^{\mathrm{I}}=\left(\begin{array}{c}
\boldsymbol{\beta} \\
\boldsymbol{\alpha}_u
\end{array}\right),
\end{equation}
where $\boldsymbol{\beta}\in\mathbb{C}^{N\times 1}$ and $\boldsymbol{\alpha}_u\in\mathbb{C}^{(M-N)\times 1}$. 
According to \eqref{vw0}, we can derive the following equation for the orthogonalized channels of the interfering users:
\begin{small}
\begin{equation}
\left\{\begin{aligned}
\bar{\boldsymbol{\nu}}_1^T \mathbf{w}_u^{\mathrm{I}} & =\frac{\boldsymbol{\nu}_1^T \mathbf{w}_u^{\mathrm{I}}}{\left\|\boldsymbol{\nu}_1\right\|} \\
& =0, \\
\bar{\boldsymbol{\nu}}_2^T \mathbf{w}_u^{\mathrm{I}} & =\boldsymbol{\nu}_2^T \mathbf{w}_u^{\mathrm{I}}-\frac{\boldsymbol{\nu}_2^T \bar{\boldsymbol{\nu}}_1}{\left\|\bar{\boldsymbol{\nu}}_1\right\|} \bar{\boldsymbol{\nu}}_1^T \mathbf{w}_u^{\mathrm{I}} \\
& =0, \\
\vdots\\
\bar{\boldsymbol{\nu}}_{U-1}^T \mathbf{w}_u^{\mathrm{I}} & =\boldsymbol{\nu}_{U-1} \mathbf{w}_u^{\mathrm{I}}-\ldots-\frac{\boldsymbol{\nu}_{U-1}^T \bar{\boldsymbol{\nu}}_{U-2}}{\left\|\bar{\boldsymbol{\nu}}_{U-2}\right\|} \bar{\boldsymbol{\nu}}_{U-2}^T \mathbf{w}_u^{\mathrm{I}} \\
& =0.
\end{aligned}\right.
\end{equation}
\end{small}
Since the first $N$ columns of $\mathbf{V}$ are given by the non-zero vectors in $\overline{\boldsymbol{\nu}}_i, i=1, \cdots, U-1$, Eq. \eqref{vw} can be rewritten as
\begin{equation}\label{vww0}
\mathbf{V}^T \mathbf{w}_u^{\mathrm{I}}=\left(\begin{array}{c}
\mathbf{0}_{N}\\
\boldsymbol{\alpha}_u
\end{array}\right)
\end{equation}
Similarly, 
\begin{equation}\label{vh}
\mathbf{V}^H \mathbf{h}_u=\left(\begin{array}{c}
\boldsymbol{\gamma} \\
\boldsymbol{\eta}_u
\end{array}\right),
\end{equation}
where $\boldsymbol{\eta}_u \in \mathbb{C}^{(M-N) \times 1}, \boldsymbol{\gamma} \in \mathbb{C}^{N \times 1}$.
\begin{equation}\label{vzv}
\mathbf{V}^H \mathbf{Z} \mathbf{V}=\left(\begin{array}{ll}
\boldsymbol{\Upsilon} & \mathbf{\Psi} \\
\boldsymbol{\Psi} & \boldsymbol{\Xi}_u
\end{array}\right),
\end{equation}
where $\boldsymbol{\Xi}_u \in \mathbb{C}^{(M-N) \times(M-N)}$, and the matrices $\boldsymbol{\Psi}$ and $\boldsymbol{\Upsilon}$ are also block matrices.

Hence, the objective function in \eqref{P1} can be transformed to
\begin{small}
\begin{align}\label{transform}
    \dfrac{(\mathbf{w}^{\rm I}_u)^T \mathbf{h}_u \mathbf{h}_u^H (\mathbf{w}^{\rm I}_u)^*}{(\mathbf{w}^{\rm I}_u)^T \mathbf{Z} (\mathbf{w}^{\rm I}_u)^*} 
& =\frac{(\mathbf{w}^{\rm I}_u)^T \mathbf{VV}^H \mathbf{h}_u \mathbf{h}_u^H \mathbf{V} \mathbf{V}^H (\mathbf{w}^{\rm I}_u)^*}{(\mathbf{w}^{\rm I}_u)^T \mathbf{V} \mathbf{V}^H \mathbf{Z} \mathbf{V} \mathbf{V}^H (\mathbf{w}^{\rm I}_u)^*}\notag \\
& =\frac{\left(\mathbf{0},\boldsymbol{\alpha}_u^T \right)\left(\begin{array}{c}
\boldsymbol{\gamma}\\
\boldsymbol{\eta}_u 
\end{array}\right)\left(\boldsymbol{\gamma}^H,\boldsymbol{\eta}_u^H\right)\left(\begin{array}{c}
\mathbf{0}\\
\boldsymbol{\alpha}_u^* 
\end{array}\right)}{\left(\mathbf{0},\boldsymbol{\alpha}_u^T\right)\left(\begin{array}{ll}
\boldsymbol{\Upsilon} & \mathbf{\Psi} \\
\boldsymbol{\Psi} & \boldsymbol{\Xi}_u
\end{array}\right)\left(\begin{array}{c}
\mathbf{0}\\
\boldsymbol{\alpha}_u^* 
\end{array}\right)} \notag\\
& =\frac{\boldsymbol{\alpha}_u^T \boldsymbol{\eta}_u \boldsymbol{\eta}_u^H \boldsymbol{\alpha}_u^*}{\boldsymbol{\alpha}_u^T \mathbf{\Xi_u} \boldsymbol{\alpha}_u^*}.
\end{align}
\end{small}
The first equality holds due to \eqref{vv}, while the second equality holds due to \eqref{vw0}, \eqref{vh}, and \eqref{vzv}. Note that the constraint of problem \eqref{P1} has been contained in \eqref{transform}. Thus, problem \eqref{P1} can be transformed into an unconstrained convex problem, which is formulated as follows:
\begin{equation}
\max\limits_{\boldsymbol{\alpha}_u} \quad \frac{\boldsymbol{\alpha}_u^T \boldsymbol{\eta}_u \boldsymbol{\eta}_u^H \boldsymbol{\alpha}_u^*}{\boldsymbol{\alpha}_u^T \boldsymbol{\Xi} \boldsymbol{\alpha}_u^*}
\end{equation}
Based on the proof process of \textbf{Lemma} \ref{lemma1}, neglecting the scaling factor, the closed-form solution to this problem is given by
\begin{equation}\label{alphau}
\boldsymbol{\alpha}_u=\boldsymbol{\Xi}_u^{-1} \boldsymbol{\eta}_u^*.
\end{equation}
According to \eqref{vww0}, the optimal solution $\mathbf{w}^{\rm I}_u$ of the problem \eqref{P1} is obtained by
\begin{align}
    {\mathbf{w}}_u^{\mathrm{I}}=\mathbf{V}^*\left(\begin{array}{c}
\mathbf{0}_{N} \\
\boldsymbol{\alpha}_u
\end{array}\right)
\end{align}
From the expression of \eqref{Rint}, if matrix $\mathbf{R}_{\rm{int}}^u$ is decomposed into its eigenvalues sorted in descending order, then in \eqref{decomp}, the first $N$ columns of matrix $\mathbf{L}_u$ are the orthonormal basis vectors of $\mathbf{h}_i, i=1,\ldots,U, i\neq u$, and these columns satisfy the condition of unitary matrix $\mathbf{V}$. This proves \textbf{Theorem} \ref{theorem1}.
\end{proof}

\ifCLASSOPTIONcaptionsoff
  \newpage
\fi


\bibliographystyle{IEEEtran}

    \bibliography{mybib}

%








\end{document}